\documentclass[final,5p,times]{elsarticle}

\usepackage{amssymb}   
\usepackage{amsthm}    
\usepackage{amsmath}
\usepackage{xcolor}
\usepackage{enumerate}
\usepackage{graphicx}
\usepackage[english]{babel}
\usepackage{hyphenat}
\usepackage{kantlipsum}
\biboptions{authoryear}
\usepackage[shortcuts]{extdash} 

\allowdisplaybreaks

\def\K{\mathcal{K}}
\def\Kinf{\K_{\infty}}
\def\KL{\mathcal{KL}}
\def\R{\mathbb{R}}
\def\N{\mathbb{N}}
\def\X{\mathcal{X}}

\def\U{\mathcal{U}}

\def\Ki{\K_\infty}
\def\T{\mathcal{T}}
\def\mer{\hfill $\circ$}

\def\qed{$\hfill\blacksquare$}

\newtheorem{theorem}{Theorem}[section]
\newtheorem{definition}[theorem]{Definition}
\newtheorem{remark}[theorem]{Remark}
\newtheorem{lemma}[theorem]{Lemma}

\newtheorem{assumption}[theorem]{Assumption}

\newtheorem{prop}[theorem]{Proposition}
\theoremstyle{remark}
\newtheorem{claim}{Claim}

\hyphenation{se-mi-glo-bal}
\hyphenation{Equi-li-brium}

\journal{Automatica}

\begin{document}

\begin{frontmatter}

\title{Semiglobal exponential input-to-state stability of sampled-data systems based on approximate discrete-time models}

\author{Alexis J. Vallarella}
\author{Paula Cardone}
\author{Hernan Haimovich}

\address{Centro Internacional Franco-Argentino de Ciencias de la Informaci\'on y de Sistemas (CIFASIS),
CONICET-UNR, Ocampo y Esmeralda,
2000 Rosario, Argentina. {\texttt{\{vallarella,cardone,haimovich\}@cifasis-conicet.gov.ar}}}

\begin{abstract}
Exact discrete-time models of nonlinear systems are difficult or impossible to obtain, and hence approximate models may be employed for control design. Most existing results provide conditions under which the stability of the approximate model in closed-loop carries over to the stability of the (unknown) exact model but only in a practical sense, i.e. the trajectories of the closed-loop system are ensured to converge to a bounded region whose size can be made as small as desired by limiting the maximum sampling period. In addition, some very stringent conditions exist for the exact model to exhibit exactly the same type of asymptotic stability as the approximate model. 
In this context, our main contribution consists in providing less stringent conditions by considering semiglobal exponential input-to-state  stability (SE-ISS), where the inputs can successfully represent state-measurement and actuation errors. These conditions are based on establishing SE-ISS for an adequate approximate model and are applicable both under uniform and nonuniform sampling. 
As a second contribution, we show that explicit Runge-Kutta models satisfy our conditions and can hence be employed. 
An example of control design for stabilization based on approximate discrete-time models is also given.
\end{abstract}

\begin{keyword}
   Sampled-data systems \sep
   nonlinear systems \sep
   nonuniform sampling \sep
   input-to-state stability (ISS) \sep
   discrete-time models.
\end{keyword}
\end{frontmatter}

\section{Introduction}
\label{sec:introduction}

Modern digital control applications involve measuring the available signals
of the continuous-time plant via a sampling mechanism
and then applying the computed control action
via zero-order hold (ZOH).
One of the existing approaches for control design
consists in designing a discrete-time control law
based 
on a discrete\-/time model of the plant.
For nonlinear systems
the exact discrete\-/time model,
i.e. the model that exactly matches 
the state of the continuous-time system
at sampling instants, 
may be difficult (or impossible) to derive 
due to the complexity (or non-existence)
of the closed form solutions of the equations that describe 
the plant dynamics.
Thus, the usual approach is to design 
the control law 
based
on a (sufficiently good) \textit{approximate} discrete-time model.

In this context, several results have been derived
in order to establish different kinds of stability properties
of the exact model or generate
adequate approximate models \citep{NesicSCL99,Nesic2002,NesicTAC04,karafyllis2009global,nevsic2009stability,monaco2007advanced,yuz2014sampled,van2012discrete,ZENG201773}.
These results usually provide conditions that
ensure certain kind of stability property
for the approximate 
closed-loop model and,
if 
the exact and approximate models satisfy some consistency property,
also ensure that the stability property or a practical version of it is also
fulfilled by the exact model for sufficiently small sampling periods.
These stability properties contemplate a wide range of situations
with respect to the uniformity of the sampling (periodic or aperiodic), the nature of the convergence to an equilibrium point (asymptotic or practical), the consideration of disturbances (input-to-state stability properties) and the nature of the maximum allowable sampling period (semiglobal or global).
Several works
address problems such as the presence of time delays \citep{difer3,difer1,difer2},
observer design \citep{ARCAK20041931,6112660,Beikzadeh20161}, and control schemes involving dual-rate
\citep{liu2008input,USTUNTURK20121796} or multirate \citep{Beikzadeh20151939,POLUSHIN20041035} sampling.
Other approaches for sampled-data stabilization only contemplate emulated controllers \citep{NESIC_A} or require Lyapunov-like assumptions on the continuous\-/time plant \citep{NESIC_A,ABDEL_B}. A recent publication 
\citep{Wei_LIN2020} 
shows that global asymptotic and
local exponential stabilizability of the continuous-time plant by state feedback imply semiglobal asymptotic stabilizability by digital state feedback; these results hold under uniform sampling.
In \citet{VALLARELLA201860}, we derived necessary and sufficient conditions for (i) semiglobal asymptotic stability, robustly with respect to bounded disturbances, and (ii) semiglobal ISS, where the (disturbance) input may successfully represent
state-measurement or actuation errors, both for discrete-time models of nonuniformly sampled (i.e. periodic or aperiodic) nonlinear systems. These properties are \emph{semiglobal} only in the sampling period, meaning that a bound on the state exists such that for every bound on the initial condition (and input), a maximum sampling period exists for which the state bound holds. 
In \citet{IEEETAC18}, 
we have shown that if a consistency property (MSEC) 
holds between the approximate and exact closed-loop models
then the smaller the maximum admissible sampling period is, the lower the error between their solutions over a fixed time period becomes.
Moreover, if the control law
renders the approximate model semiglobal practical ISS under nonuniform sampling (SP-ISS-VSR), then the same controller
ensures SP-ISS-VSR of the exact 
closed\-/loop model.
In all of these existing results, stability of the exact model is either semiglobal and practical
or global and asymptotic. The conditions for ensuring global stability are stringent.
To the best of the authors' knowledge, results based on approximate discrete-time models, ensuring
semiglobal and asymptotic stability under conditions that may hence be much weaker than 
those required for a global result, without requiring Lyapunov-like assumptions on the continuous-time plant, and admitting nonuniform sampling and control laws not necessarily based on emulation (which may use the knowledge of the current sampling period to compute the control action), have not been previously derived.

The main purpose of this paper is thus to provide results that ensure semiglobal asymptotic 
ISS under nonuniform sampling and in the presence of disturbances 
that successfully cover the case of state-measurement and actuation errors.
Specifically, we give sufficient conditions
for semiglobal exponential ISS
under nonuniform sampling (SE-ISS-VSR) of the exact 
closed-loop  
model, based on the fact that the same property holds for an approximate model.
To do that, 
we 
introduce 
two novel consistency properties that take disturbances into account: Robust Equilibrium\-/Preserving Consistency (REPC) and Multistep Consistency (REPMC).
REPC bounds the mismatch after only one sampling period and REPMC bounds the mismatch between the models' trajectories over finite time intervals, irrespective of how many sampling periods fall within the interval.
As a second contribution, we show
that any explicit 
and consistent Runge-Kutta model
is REPC with the exact model under very mild conditions not
requiring high-order differentiability of the function
that defines the continuous-time plant.

The organization of this paper is as follows.
In Section \ref{sec:preliminaries} we present
a brief summary of the notation employed, 
we state the problem and the required definitions and
properties. Our main results are given in Section \ref{sec:mainresults2}. 
An illustrative
example of stabilization of a plant via discrete-time design is provided in Section \ref{sec:example}. Concluding remarks are
presented in Section \ref{sec:conclusions}. 
The Appendix contains the proofs of the presented results and some of the intermediate technical points.

\section{Preliminaries}
\label{sec:preliminaries}

\subsection{Notation}
\label{sec:notation}
$\R$, $\R_{\ge 0}$, $\N$ 
and $\N_0$ denote the sets of real, nonnegative real, natural and nonnegative integer numbers, respectively. 
We write $\alpha \in \K$ if $\alpha : \R_{\ge 0} \to \R_{\ge 0}$ is strictly increasing, 
continuous and $\alpha(0)=0$. We write $\alpha \in \Kinf$ if $\alpha\in\K$ and $\alpha$ is unbounded. 
We write $\beta \in \KL$ if $\beta : \R_{\ge 0} \times \R_{\ge 0} \to \R_{\ge 0}$, $\beta(\cdot,t)\in \K$ for all $t\ge 0$,
and $\beta(s,\cdot)$ is strictly decreasing asymptotically to $0$ for every $s$.
We denote the Euclidean norm of a vector $x \in \R^n$ by $|x|$. 
We denote an infinite sequence as $\{T_i\}:=\{T_i\}_{i=0}^{\infty}$. 
For any sequences $\{T_i\} \subset \R_{\ge 0}$ and $\{e_i\} \subset \R^m$, and any $\gamma\in\K$,
we take the following conventions: $\sum_{i=0}^{-1} T_i = 0$ and $\gamma(\sup_{0\le i\le -1}|e_i|) = 0$. 
Given a real number $T>0$ we denote by 
$\Phi(T):=\{ \{T_i\} : \{T_i\} \text{ is such that }  T_i \in (0,T) \text{ for all } i\in \N_0 \}$
the set of all sequences of real numbers in the open interval $(0,T)$. 
For a given sequence we denote the norm $\|\{x_i\}\|:= \sup_{i\geq0} |x_i|$.

\subsection{Discrete-time models}
\label{sec:problem_statement}
We consider discrete-time models for sampled continuous-time nonlinear systems of the form
\begin{equation}
\label{eq:cs}
 \dot{x}=f(x,u),\quad  x(0)=x_0, 
\end{equation}
under zero-order hold,
where $x(t) \in \R^n$, $u(t) \in \R^m$ are the state and control vectors respectively.
We consider that the sampling instants 
$t_k$, $k\in \N_0$, satisfy $t_0 = 0$ and $t_{k+1} = t_k + T_k$, where $\{T_k\}_{k=0}^{\infty}$ is the sequence of corresponding sampling periods.
We consider that sampling periods may vary;
we refer to this scheme as Varying Sampling Rate (VSR).
We also assume that the next sampling instant $t_{k+1}$, and hence
the current sampling period $T_k$, is known at the current sampling 
instant $t_k$. 
This situation is typical of schemes
where the controller sets the next sampling instant
according to a specific control strategy
as in self-triggered control \citep{Anta2010}.
Due to zero-order hold, the continuous-time control signal is piecewise constant
such that $u(t) = u(t_k) =: u_k$ for all $t\in [t_k,t_{k+1})$.
The class of discrete-time systems that arise when modelling
\eqref{eq:cs} under this scheme is thus of the form
\begin{equation}
  \label{eq:fut}
  x_{k+1} = F^{\diamondsuit}(x_k,u_k,T_k),
\end{equation}
meaning that the state at the next sampling instant depends on the current state and input values, as well as on the current sampling period. We will set $\diamondsuit=e$ to symbolize that 
the discrete-time model is exact (i.e. its state
coincides with that of the continuous-time plant state at sampling instants). We will set
$\diamondsuit=a,b,c$, etc., for other in principle arbitrary discrete-time models, and $\diamondsuit = Euler$ or $\diamondsuit = RK$ for the Euler or a Runge-Kutta model, respectively. Using our notation and the definition of the Euler model, then $F^{Euler}(x,u,T):=x+T f(x,u)$.

Given that the current sampling period $T_k$ 
is known or 
determined at the current sampling instant $t_k$, 
the current control action $u_k$ may depend not only on the current state sample $x_k$ but also on $T_k$.
If
state-measurement or actuation errors exist
we will denote them by $e_k \in \R^q$,
where 
the dimension $q$ depends on the type of error 
(i.e., $q=n$ for state-measurement additive error or $q=m$ for actuation additive error).
In this case, the true control action applied will also be affected by such errors
\begin{equation}
  \label{eq:UxeT}
  u_k = U(x_k,e_k,T_k).
\end{equation}
This scheme also covers the case of static output
feedback. For a system's output $y=h(x) \in \R^p$ and
a control law $W(y,e,T)$ we can simply define the function $U(x,e,T):=W(h(x),e,T)$ 
to obtain \eqref{eq:UxeT}.
Under (\ref{eq:UxeT}), the closed-loop model given by the pair $(U,F^{\diamondsuit})$ becomes
\begin{equation}
  \label{eq:system1}
  x_{k+1} =  F^{\diamondsuit}(x_k, U(x_k,e_k,T_k),T_k) =: \bar F^{\diamondsuit}(x_k,e_k,T_k)
\end{equation}
which is once again of the form \eqref{eq:fut}.
For the sake of notation,
we may refer to the discrete-time model  \eqref{eq:system1} simply as $\bar F^{\diamondsuit}$.

\subsection{Definitions and stability properties}

We consider that the continuous-time 
model of the plant \eqref{eq:cs} and the control law \eqref{eq:UxeT} 
fulfill
the following 
assumptions.

\begin{assumption}
\label{def:UniLipschitz}
The function  $f:\R^n \times \R^m \rightarrow \R^n$
is locally Lipschitz in $x$ uniformly in $u$, i.e. 
for every compact sets $\mathcal{X}\subset \R^n$ 
$\mathcal{U}\subset \R^m$
there exists
 $L=L(\X,\U)>0$ such that for all $x,y \in \X$ and $u\in \mathcal{U}$
we have $\left|f(x,u)-f(y,u)\right| \leq L|x-y|$.
\end{assumption}

\begin{assumption}
\label{def:bounded}
The function $f$ is locally bounded, i.e.
for every $M,C_u\geq 0$ there exists
$C_f=C_f(M,C_u)>0$, 
with $C_f(\cdot,\cdot)$ nondecreasing in each variable,
such that 
$|f(x,u)| \leq C_f$ 
for every $|x|\leq M$ and $|u| \leq C_u$.
\end{assumption}

\begin{assumption}
\label{def:controllaw}
The control law $U(x,e,T)$ is
\textit{small-time locally uniformly bounded}, i.e.
for every $M,E\geq 0$ there exist $T^{u}=T^{u}(M,E)>0$ and $C_u=C_u(M,E)>0$, with $T^u(\cdot,\cdot)$ nonincreasing in each variable and $C_u(\cdot,\cdot)$ nondecreasing in each variable,
such that $|U(x,e,T)| \leq C_u$ for all $|x|\leq M$,
$|e|\leq E$,
and $T\in (0,T^{u})$. 
\end{assumption}

\begin{remark}
Assumption \ref{def:controllaw} ensures that there exists
a maximum sampling period such that the control law remains bounded
for all states and disturbances whose norms are bounded by 
$M,E\ge 0$, respectively. For example, the control law $U(x,e,T)=-x/(1-T|x|)$ is
small-time locally uniformly bounded with $T^u=1/(2M)$ 
but $U(x,e,T)=-x/(T(1-|x|))$ is not, as it grows unbounded for every $T>0$ when $|x|\rightarrow 1$.
\end{remark}

The following stability definitions are used
throughout the paper.

\begin{definition} 
\label{def:SE-ISS-VSR}
  The system (\ref{eq:system1}) is said to be
  \begin{enumerate} [i)]
    \item 
  \label{def:ISS}
   \textit{Semiglobally ISS-VSR} (S-ISS-VSR) if there exist 
 $\beta \in \KL$ and $\gamma \in \K_\infty$
    such that for all $M,E\geq0$ there exists
    $T^{\star}=T^{\star}(M,E)>0$ such that 
   for all $k\in \N_0$, $\{T_i\} \in \Phi(T^\star)$, $|x_0|\leq M$ and $\|\{e_i\}\|\leq E$ 
    the solutions of \eqref{eq:system1} satisfy
      \begin{equation}
      \label{eq:S-ISS-VSR}
      |x_k|\leq \beta\left(|x_0|,\sum_{i=0}^{k-1}T_i\right)+\gamma\left(\sup_{0\le i \le k-1}|e_i| \right ).
    \end{equation}
    
\item   \label{def:SEISS}
   \textit{Semiglobally Exponentially ISS-VSR} (SE-ISS-VSR)
   if it is S-ISS-VSR and additionally $\beta \in \KL$
   can be chosen as $\beta(r,t):=K r \exp(-\lambda t)$
   with   $K\geq 1$ and $\lambda>0$. 
   
\end{enumerate}
\end{definition}

The S-ISS-VSR property 
was introduced in \citet{VALLARELLA201860}.
Since the maximum admisible sampling period $T^\star$ depends
on the bound of the initial condition and error input
it constitutes a natural semiglobal version 
of the ISS property for discrete-time models 
under nonuniform sampling. The fact that it holds for
all posible sequences of sampling periods that are bounded
by $T^\star$ makes it useful in linking the stability of the sampled-data system with that of its discrete-time model.

\section{Main results}
\label{sec:mainresults2}

\subsection{SE-ISS-VSR via approximate discrete-time models}
\label{subsec:SE-ISS}

In this section, we give novel sufficient conditions for SE-ISS-VSR of the exact discrete-time model based on an approximate model. For this, we introduce two novel consistency properties:
Robust Equilibrium\-/Preserving Consistency (REPC),
which is a one-step property, and Robust Equilibrium\-/Preserving Multistep Consistency (REPMC).
Specifically, we will prove that
if the approximate closed-loop model is SE-ISS-VSR and if the exact and approximate models are REPMC, 
then the exact closed-loop model is also SE-ISS-VSR. 

\begin{definition}
\label{def:HOA}
The discrete-time model $\bar F^a$ 
is said to be 
Robustly Equilibrium\-/Preserving Consistent (REPC)
with  $\bar F^b$ if
there exists $\phi \in \K_\infty$ such that 
for each $M,E\geq 0$
there exist constants 
$K:=K(M,E)>0$, $T^*:=T^*(M,E)>0$ and a function $\rho \in \K_\infty$
such that 
\begin{align}
\label{eq:kraftwerk}
&\left|\bar F^a (x^a,e,T)- \bar F^b(x^b,e,T) \right|  \notag \\
&\quad \leq (1+KT)\left|x^a-x^b\right|+ T \rho(T)  \left(\max\{|x^a|,|x^b|\}+\phi(|e|)\right)
\end{align}
for all $|x^a|,|x^b| \leq M$, $|e|\leq E$ and $T\in (0,T^*)$.
The pair $(\bar F^a, \bar F^b)$ is said to be REPC if $\bar F^a$ is REPC with $\bar F^b$.
\end{definition}

The REPC
condition is robust in the sense that it admits the presence of 
discrete-time bounded disturbances and ensures that their effect on the mismatch between
models in one step is bounded by a quantity that can be reduced by decreasing the sampling period. 
REPC additionally requires $\bar F^a(0,0,T) = \bar F^b(0,0,T)$, and thus forces the mismatch between models to approach 0 as the equilibrium is approached in the absence of disturbances. The latter feature is key in allowing any type of asymptotic stability to be mirrored from one model to the other.

It is evident that REPC 
is symmetric (if $\bar F^a$ is REPC with $\bar F^b$, then $\bar F^b$ is REPC with $\bar F^a$). 
REPC is also 
transitive, as stated in Proposition~\ref{cor:1} and proven in~\ref{app:proof:cor:1}. 
\begin{prop}
\label{cor:1}
Suppose that the pairs $(\bar F^a, \bar F^b)$
and $(\bar F^b, \bar F^c)$ are REPC.
Then $(\bar F^a, \bar F^c)$ is REPC.
\end{prop}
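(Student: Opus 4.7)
The plan is to invoke the triangle inequality with an intermediate expression $\bar F^b(x^a,e,T)$ and then apply each REPC hypothesis once, choosing the intermediate state argument so that one of the two resulting $|x^a - x^b|$ terms vanishes.

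Concretely, fix $M, E \geq 0$ and denote by $\phi_{ab}, K_{ab}(M,E), T^*_{ab}(M,E), \rho_{ab}$ and $\phi_{bc}, K_{bc}(M,E), T^*_{bc}(M,E), \rho_{bc}$ the data provided by REPC for $(\bar F^a,\bar F^b)$ and $(\bar F^b,\bar F^c)$, respectively. Set $T^*_{ac} := \min\{T^*_{ab}, T^*_{bc}\}$, let $|x^a|,|x^c|\le M$, $|e|\le E$, and $T\in(0,T^*_{ac})$. Applying the triangle inequality gives
\begin{align*}
&\bigl|\bar F^a(x^a,e,T)-\bar F^c(x^c,e,T)\bigr| \\
&\quad \leq \bigl|\bar F^a(x^a,e,T)-\bar F^b(x^a,e,T)\bigr| + \bigl|\bar F^b(x^a,e,T)-\bar F^c(x^c,e,T)\bigr|.
\end{align*}
The first term, by REPC for $(\bar F^a,\bar F^b)$ with both state arguments equal to $x^a$, is bounded by $T\rho_{ab}(T)(|x^a|+\phi_{ab}(|e|))$, and the second, by REPC for $(\bar F^b,\bar F^c)$, by $(1+K_{bc}T)|x^a-x^c|+T\rho_{bc}(T)(\max\{|x^a|,|x^c|\}+\phi_{bc}(|e|))$. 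Using $|x^a|\le\max\{|x^a|,|x^c|\}$ and collecting terms yields
\begin{align*}
&\bigl|\bar F^a(x^a,e,T)-\bar F^c(x^c,e,T)\bigr| \\
&\quad \leq (1+K_{bc}T)|x^a-x^c| + T(\rho_{ab}(T)+\rho_{bc}(T))\max\{|x^a|,|x^c|\} \\
&\qquad + T\rho_{ab}(T)\phi_{ab}(|e|) + T\rho_{bc}(T)\phi_{bc}(|e|).
\end{align*}

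Finally, define $K_{ac}:=K_{bc}(M,E)$, $\rho_{ac}(T):=\rho_{ab}(T)+\rho_{bc}(T)\in\K_\infty$, and, crucially, $\phi_{ac}:=\phi_{ab}+\phi_{bc}\in\K_\infty$. Since $\phi_{ab}$ and $\phi_{bc}$ are each independent of $M$ and $E$, so is $\phi_{ac}$, which is precisely what Definition~\ref{def:HOA} requires (only $K$, $T^*$ and $\rho$ are allowed to depend on $M,E$). As all summands are nonnegative we have
\begin{equation*}
T\rho_{ab}(T)\phi_{ab}(|e|) + T\rho_{bc}(T)\phi_{bc}(|e|) \leq T\rho_{ac}(T)\phi_{ac}(|e|),
\end{equation*}
so the preceding bound collapses to the REPC inequality \eqref{eq:kraftwerk} for the pair $(\bar F^a,\bar F^c)$, completing the proof.

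The argument is essentially a two-step triangle-inequality estimate and I do not anticipate a genuine obstacle; the one subtlety worth highlighting in the write-up is keeping track of which constants in the REPC definition are allowed to depend on $(M,E)$ and which must be uniform, in particular verifying that the composed $\phi_{ac}$ inherits the required $(M,E)$-independence.
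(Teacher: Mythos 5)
Your proof is correct and follows essentially the same route as the paper's: a single triangle inequality through $\bar F^b$, with the intermediate state argument chosen so that one application of REPC has coinciding state arguments (you use $\bar F^b(x^a,e,T)$ where the paper uses $\bar F^b(x^c,e,T)$, a purely cosmetic difference), followed by the same aggregation $\rho:=\rho_{ab}+\rho_{bc}$, $\phi:=\phi_{ab}+\phi_{bc}$. Your explicit remark that $\phi_{ac}$ inherits independence of $(M,E)$ is a point the paper leaves implicit, and is worth keeping.
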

We next introduce the REPMC property,
which extends the 
linear gain multistep upper consistency 
property
in \cite[Definition~5]{nevsic2009stability} 
by the facts that: (i) it
is a perturbation-admitting condition, and (ii) it is
semiglobal with respect to the magnitude of the initial condition
and disturbance input.
%
\begin{definition}
\label{eq:default}
The discrete-time model $\bar F^a$
is said to be 
Robustly Equilibrium\-/Preserving Multistep Consistent
(REPMC)
with  $\bar F^b$ if 
there exists $\phi \in \K_\infty$
such that 
for each $M,E\geq0$
and $\T,\eta>0$
there exist a constant $T^*=T^*(M,E,\T,\eta)>0$ 
and a function 
$\alpha:\R_{\geq0} \times \R_{\geq0}  \rightarrow \R_{\geq0} \cup \{\infty\} $
with $\alpha(\cdot,T)$ non-decreasing for all $T\in[0,T^*)$
such that 
\begin{equation}
|x^a-x^b|\leq\delta \Rightarrow|\bar F^a(x^a,e,T)-\bar F^b(x^b,e,T)|
\leq \alpha(\delta,T)
\label{eq:orwell}
\end{equation}
for all $|x^a|,|x^b| \leq M$, $|e| \leq E$ and $T\in(0,T^*)$, 
and $\sum_{i=0}^{k-1} T_i \leq \T$ implies
\vspace{-0.4cm}
\begin{equation}
\alpha^k(0,\{T_i\}):=
\overbrace{\alpha(\cdots\alpha(\alpha}^{k}(0,T_0),T_1)\cdots,T_{k-1}) \leq \eta M +  \phi(E).
\label{eq:trap}
\end{equation}
\end{definition}
In \citet[Definition~2.6]{IEEETAC18}, we introduced a perturbation-admitting consistency property
called MSEC. The main difference
between MSEC and REPMC is that the latter requires the difference between model solutions to become smaller as the equilibrium is approached, and forces such a difference to be 0 at the equilibrium (case $M=E=0$). At the same time, REPMC does not require the effect of the 
disturbances on the difference between solutions to decrease as the sampling period is decreased.
Lemma~\ref{lem:menem} makes these facts more explicit; its proof is given in \ref{app:proof:menem}.
%
\begin{lemma}
\label{lem:menem}
Suppose that 
$\bar F^a$
is REPMC with 
$\bar F^b$
as per Definition~\ref{eq:default}
with function $\phi \in \K_\infty$.
Let $x^a(k,\xi,\{e_i\},\{T_i\})$ and $x^b(k,\xi,\{e_i\},\{T_i\})$
be the solutions
with
initial condition
$\xi \in \R^n$,
input sequence $\{e_i\}$
and sampling period sequence $\{T_i\}$ 
for the models $\bar F^a$ and $\bar F^b$,
respectively.
Then for each $M_a,E\geq0$ and
$\T,\eta > 0$, there exists
$T^{L}=T^L(M_a,E,\T,\eta)>0$ such that, if $\xi \in \R^n$ satisfies
\begin{equation}
|x^a(k,\xi,\{e_i\},\{T_i\})| \leq M_a
\end{equation}
for all $\{T_i\} \in \Phi(T^L)$,
$\|\{e_i\}\|\leq E$ and 
$k\in \N_0$ for which $\sum_{i=0}^{k-1} T_i\in [0,\T]$,
then
\begin{equation*}
|x^b(k,\xi,\{e_i\},\{T_i\})-x^a(k,\xi,\{e_i\},\{T_i\})| 
\leq \eta |\xi|  + \phi\left(\sup_{0\leq i \leq k-1} |e_i|\right).
\end{equation*}
for all $\{T_i\} \in \Phi(T^L)$,
$\|\{e_i\}\|\leq E$ and 
$k\in \N_0$ for which $\sum_{i=0}^{k-1} T_i\in [0,\T]$.
\end{lemma}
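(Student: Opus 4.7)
The plan is to iterate the one-step REPMC bound \eqref{eq:orwell} along the joint trajectory and then close the argument via the multistep bound \eqref{eq:trap}, after a careful tuning of the REPMC free parameters.

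Given $M_a$, $E$, $\T$, and $\eta$, I would first invoke Definition~\ref{eq:default} with parameters $(M,E,\T,\eta_R)$ where $M := 2(M_a + \phi(E))$ and $\eta_R := 1/2$, obtaining a threshold $T^\star$ and a function $\alpha$; the value $T^L$ will be at most $T^\star$. By a simultaneous induction on $k$ with $\sum_{i<k} T_i \le \T$, I would establish that $|x^b_k|\le M$ and
\begin{equation*}
|x^b_k - x^a_k| \le \alpha^k(0,\{T_i\}).
\end{equation*}
The base case $k=0$ is immediate since $x^a_0 = x^b_0 = \xi$ and $|\xi|\le M_a\le M$. The inductive step is an application of \eqref{eq:orwell} at step $k-1$ with $\delta := \alpha^{k-1}(0,\{T_i\}) \ge |x^b_{k-1}-x^a_{k-1}|$, which is admissible because $|x^a_{k-1}|\le M_a\le M$ by hypothesis, $|x^b_{k-1}|\le M$ by the inductive assumption, and $|e_{k-1}|\le E$ by assumption on the disturbance sequence; the triangle inequality $|x^b_k|\le M_a + |x^b_k-x^a_k|$ combined with \eqref{eq:trap} and the choice $M\ge M_a+\eta_R M+\phi(E)$ then propagates the ball condition to step $k$. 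Invoking \eqref{eq:trap} yields the raw estimate $|x^b_k - x^a_k| \le \eta_R M + \phi(E)$.

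The principal obstacle is upgrading this raw estimate to the target $\eta|\xi| + \phi(\sup_{0\le i \le k-1}|e_i|)$: the raw bound scales with the global $M_a$ and $E$, while the target scales with the trajectory-specific $|\xi|$ and $\sup_i|e_i|$. My approach to closing this gap is to invoke REPMC not once but across a (finite) family of parameter pairs $(M',E')$ covering $[0,M_a]\times[0,E]$, exploiting the equilibrium-preserving feature of REPMC (the case $M=E=0$ of Definition~\ref{eq:default}) which forces $\alpha^k(0,\{T_i\}) \to 0$ as $(M',E')\to(0,0)$, together with the $\Kinf$ character of $\phi$. Each concrete trajectory would then be matched to the cell of the cover determined by its actual $(|\xi|,\sup_i|e_i|)$, and $T^L$ would be chosen as the minimum of the REPMC thresholds over the cover so as to be uniformly valid; the cover must be arranged so that the local estimate in each cell is bounded by $\eta|\xi|+\phi(\sup_i|e_i|)$, which is the delicate bookkeeping step that I expect to dominate the proof.
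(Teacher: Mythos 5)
Your induction is sound and coincides with the first half of the paper's argument: the paper likewise invokes Definition~\ref{eq:default} for an enlarged ball (it uses $M_e:=(1+\eta)M_a+\phi(E)$ rather than your $2(M_a+\phi(E))$), propagates $|x^b_k|\le M_e$ by the triangle inequality exactly as you do, and iterates \eqref{eq:orwell} to get $|\Delta x_k|\le\alpha^k(0,\{T_i\})$. Two differences in the second half: the paper applies the definition with the \emph{given} $\eta$ rather than a fixed $\eta_R=1/2$, and it then concludes in a single line by reading the target bound $\eta|\xi|+\phi(\sup_{0\le i\le k-1}|e_i|)$ off \eqref{eq:trap}; no covering of the parameter space appears anywhere in the paper's proof.

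The genuine gap is in your proposed upgrade step: the covering argument cannot deliver the stated bound. Consider the disturbance term first. For the one-step estimate \eqref{eq:orwell} to be applicable along a trajectory with $\sup_i|e_i|=\epsilon$, the cell parameter $E'$ must satisfy $E'\ge\epsilon$; but \eqref{eq:trap} then returns the additive term $\phi(E')\ge\phi(\epsilon)$, whereas the target requires an additive term $\le\phi(\epsilon)$. Since $\phi$ is strictly increasing, this forces $E'=\epsilon$ exactly, so no finite cover of $[0,E]$ by non-degenerate cells can be arranged so that ``the local estimate in each cell is bounded by $\eta|\xi|+\phi(\sup_i|e_i|)$.'' The state term is worse: the hypothesis of the lemma only guarantees $|x^a_k|\le M_a$, not $|x^a_k|\lesssim|\xi|$, so you cannot invoke REPMC with a cell radius $M'$ comparable to $|\xi|$ --- the trajectory $x^a$ may leave the ball of radius $M'$ even when $|\xi|$ is tiny, and then \eqref{eq:orwell} is unavailable at every step. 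So the ``delicate bookkeeping'' you defer is not bookkeeping but the crux, and it does not close. The only place the trajectory-dependent quantities $|\xi|$ and $\sup_i|e_i|$ can enter is through the one-step recursion itself (as in the concrete $\alpha(\delta,T)=(1+KT)\delta+T\rho(T)(\max\{|x^a|,|x^b|\}+\phi(|e|))$ of Lemma~\ref{lemma:REPMC_SUFFICIENT}, unrolled along the actual trajectory), not through a re-parametrization of the uniform multistep bound $\alpha^k(0,\{T_i\})\le\eta M+\phi(E)$.
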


Our main result is the following.

\begin{theorem}
\label{theorem:SEISS}
 Consider that
 \begin{enumerate}[i)]
 \item 
 $(\bar F^a,\bar F^b)$ is REPMC 
 with $\phi\in \K_\infty$.
\label{item:hoa123333}
 \item $x^a_{k+1}=\bar F^a(x^a_k,e_k,T_k)$ is SE-ISS-VSR with $K_a\geq 1$, $\lambda_a>0$ and $\gamma_a \in \K_\infty$. \label{item:SES12}
 \end{enumerate}
Then $x^b_{k+1}=\bar F^b(x^b_k,e_k,T_k)$  is SE-ISS-VSR with $K_b\geq 1$, $\lambda_b>0$ and $\gamma_b \in \K_\infty$ given by
\begin{align*}
    &K_b:=\frac{K_a+\eta}{\delta} \medspace \text{, } \medspace \lambda_b:=- \frac{\ln(\delta)}{\frac{1}{\lambda_a} \ln(\frac{K_a}{\delta-\eta})+1}
    \medspace \text{, }  \\
    &\gamma_b:=\left(\frac{K_a+\eta}{1-\delta}+1 \right)(\gamma_a+\phi),
\end{align*}
with $0<\eta< \delta < 1$  
quantities that can be chosen arbitrarily.
\end{theorem}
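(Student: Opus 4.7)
The strategy is to slice the time axis into successive windows whose real-time length lies in $[\T,\T+1)$, and on each window to compare the $\bar F^b$ trajectory with a trajectory of $\bar F^a$ \emph{restarted} from the current $\bar F^b$ value. On each window, SE-ISS-VSR of $\bar F^a$ contracts that restart by a factor $\delta-\eta$, while Lemma~\ref{lem:menem} bounds the $\bar F^a$/$\bar F^b$ mismatch by $\eta|\xi|+\phi(E)$; together they yield a net per-window contraction factor $\delta$, and iterating produces the exponential decay. Concretely, fix $M,E\ge 0$ and $0<\eta<\delta<1$. Choose $\T:=(1/\lambda_a)\ln(K_a/(\delta-\eta))$, so that $K_a e^{-\lambda_a\T}=\delta-\eta$. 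Introduce the a priori bounds $M_{\max}:=M+(\gamma_a+\phi)(E)/(1-\delta)$ (intended for $|x^b|$) and $M_a:=K_a M_{\max}+\gamma_a(E)$ (intended for the restarted $\bar F^a$-trajectories). Let $T^\star_a=T^\star_a(M_{\max},E)$ come from item~\ref{item:SES12}, let $T^L=T^L(M_a,E,\T+1,\eta)$ come from Lemma~\ref{lem:menem}, and set $T^\star:=\min\{T^\star_a,T^L,1\}$. The cap at $1$ ensures that every window has real-time length strictly less than $\T+1$, which is ultimately what produces the ``$+1$'' in the denominator of $\lambda_b$.

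\textbf{Per-window contraction via induction.} Set $k_0:=0$ and, recursively, $k_{j+1}:=\min\{k>k_j:\sum_{i=k_j}^{k-1}T_i\ge\T\}$. We claim, by induction on $j$, that $|x^b_{k_j}|\le M_j$ where $M_0:=M$ and $M_{j+1}:=\delta M_j+(\gamma_a+\phi)(E)$; solving the recursion yields $M_j\le\delta^j M+(\gamma_a+\phi)(E)/(1-\delta)\le M_{\max}$, so the a priori bounds are self-consistent. For the inductive step, let $\tilde x^a_0:=x^b_{k_j}$ and $\tilde x^a_{\ell+1}:=\bar F^a(\tilde x^a_\ell,e_{k_j+\ell},T_{k_j+\ell})$. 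Since $|\tilde x^a_0|\le M_j\le M_{\max}$, item~\ref{item:SES12} gives $|\tilde x^a_\ell|\le K_a M_j e^{-\lambda_a\sum_{i=0}^{\ell-1}T_{k_j+i}}+\gamma_a(E)\le M_a$ for all $\ell$ inside the window, and in particular $|\tilde x^a_{k_{j+1}-k_j}|\le(\delta-\eta)M_j+\gamma_a(E)$. With this a priori bound on $\tilde x^a$ in place, Lemma~\ref{lem:menem} applied to $\xi:=x^b_{k_j}$ with shifted inputs $\{e_{k_j+i}\}$, shifted sampling periods $\{T_{k_j+i}\}$ and horizon $\T+1$ gives $|x^b_{k_j+\ell}-\tilde x^a_\ell|\le\eta|x^b_{k_j}|+\phi(E)\le\eta M_j+\phi(E)$ throughout the window. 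Adding the two bounds at $\ell=k_{j+1}-k_j$ closes the induction, and the same addition at arbitrary $\ell$ in the window yields the \emph{intra-window} bound $|x^b_{k_j+\ell}|\le(K_a+\eta)M_j+(\gamma_a+\phi)(E)$.

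\textbf{Exponential assembly.} Because every window has real-time length strictly less than $\T+1$, any $k$ lying in window $j$ satisfies $\sum_{i=0}^{k-1}T_i<(j+1)(\T+1)$, whence $\delta^j\le\delta^{-1}e^{-\lambda_b\sum_{i=0}^{k-1}T_i}$ with $\lambda_b:=-\ln(\delta)/(\T+1)$. Substituting $M_j\le\delta^j M+(\gamma_a+\phi)(E)/(1-\delta)$ into the intra-window bound produces
\[
|x^b_k|\le (K_a+\eta)\delta^j M+\left[\frac{K_a+\eta}{1-\delta}+1\right](\gamma_a+\phi)(E)\le K_b M e^{-\lambda_b\sum_{i=0}^{k-1}T_i}+\gamma_b(E),
\]
with the $K_b$, $\lambda_b$, $\gamma_b$ as stated, proving SE-ISS-VSR of $\bar F^b$.

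\textbf{Main obstacle.} The principal subtlety is choosing a single maximum sampling period $T^\star(M,E)$ up front for which both item~\ref{item:SES12} and Lemma~\ref{lem:menem} remain applicable on every window, since on window $j\ge 1$ the ``initial condition'' $x^b_{k_j}$ is only known through the induction and not through $x_0$. This is what forces the restart-type comparison (as opposed to comparing $x^b$ against a single $x^a$ originating at $x_0$, whose mismatch would grow) and the prior estimate $M_{\max}$ on $|x^b|$ that is vindicated only after the induction closes. A secondary technical point is the cap $T^\star\le 1$, responsible for the ``$+1$'' in the denominator of $\lambda_b$ via the bound on the real-time overshoot of each window by one sampling period.
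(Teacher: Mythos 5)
Your proof is correct and follows essentially the same route as the paper's: the same slicing of the time axis into windows of real-time length between $\T$ and $\T+1$, the same restart-and-compare argument that combines the SE-ISS-VSR contraction of $\bar F^a$ over one window with the mismatch bound of Lemma~\ref{lem:menem} to obtain a per-window factor $\delta=(\delta-\eta)+\eta$, and the same assembly yielding identical $K_b$, $\lambda_b$, $\gamma_b$. The only point to tidy is that the final estimate must carry $|x_0|$ and $\sup_{0\le i\le k-1}|e_i|$ rather than their bounds $M$ and $E$, as Definition~\ref{def:SE-ISS-VSR} requires; this is immediate by taking $M_0:=|x_0|$ in your recursion and tracking the running supremum of the inputs, with no other changes.
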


The proof of Theorem~\ref{theorem:SEISS} is given in \ref{app:proof:SEISS}.
Theorem~\ref{theorem:SEISS} provides a sufficient condition, namely the REPMC property, for the SE-ISS-VSR of a (closed-loop) model $\bar F^a$ to carry over to another model $\bar F^b$ (and viceversa). Therefore, to establish SE-ISS-VSR of the exact model it suffices to ensure that some approximate model is SE-ISS-VSR on the one hand and REPMC with the exact model on the other. In the next subsections, we will give sufficient conditions for an approximate model to be REPMC with the exact model and show that these conditions are not restrictive.

Sufficient Lyapunov-type checkable conditions for 
SE-ISS-VSR of a discrete-time model
are presented in Theorem~\ref{adaptado}.
The proof is obtained by performing 
minor changes to the proof of the S-ISS-VSR characterization in
\cite[Theorem~3.2]{VALLARELLA201860} and is given
in \ref{app:proof:elcor}.

\begin{theorem} [Adapted from Theorem~3.2 of \cite{VALLARELLA201860}]
\label{adaptado}
Suppose that
\begin{enumerate}[i)]
    \item There exists $\mathring T > 0$ so that $\bar F (0, 0, T ) = 0$ for all 
    $T \in (0, \mathring T)$.\label{adaptado_1}
    \item There exists  $\hat T > 0$ such that for every $\epsilon >0$
    there exists $\delta = \delta(\epsilon)>0$ such that $|\bar F (x, e, T )| < \epsilon$ whenever $|x| \leq \delta$, $|e| \leq \delta$ and $T \in (0, \hat T )$. \label{adaptado_2}
    \item For every $M,E \geq 0$, there exist $C = C(M, E) >
0$ and $\check T = \check T (M, E) > 0$, with $C(\cdot, \cdot)$ nondecreasing in
each variable and  $\check T (\cdot, \cdot)$ nonincreasing in each variable,
such that $|\bar F (x, e, T )| \leq C$ for all $|x|<M$, $|e|<E$ and 
$ \check T \in (0,T)$. \label{adaptado_3}
\item   \label{adaptado_4}
There exist $\alpha_1,\alpha_2, \alpha_3  \in \K_\infty$ defined as
$\alpha_1(s):= K_1 s^N$,
$\alpha_2(s):= K_2 s^N$
and
$\alpha_3(s):= K_3 s^N$ with $N>0$ and $K_i\geq 1$ for all $i\in\{1,2,3\}$ and $\rho \in \K$ such that for
every $M,E\geq 0$ there exist $ \tilde T = \tilde T (M, E) > 0$
and $V = V_{M,E} : \R^n \rightarrow \R_{\geq0} \bigcup \{\infty\}$ such that
\begin{subequations}
\begin{align}
\alpha_1(|x|) \leq V(x), \quad \forall x \in \R^n,  \\
V(x) \leq \alpha_2(|x|), \quad \forall |x| \leq M, 
\end{align}
\end{subequations}
and
\begin{equation}
V(\bar F(x,e,T))-V(x) \leq  -T \alpha_3(|x|)
\end{equation}
for all $\rho(|e|)\leq |x| \leq M$, $|e|\leq E$ and
$T \in (0,\tilde T)$.
\end{enumerate}
then the system \eqref{eq:system1} is SE-ISS-VSR.
\end{theorem}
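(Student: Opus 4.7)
My plan is to adapt the proof of the S-ISS-VSR characterization in \cite[Theorem~3.2]{VALLARELLA201860}, exploiting the polynomial form of $\alpha_1,\alpha_2,\alpha_3$ to upgrade the $\KL$ envelope produced there to an exponential one, as required by Definition~\ref{def:SE-ISS-VSR}. The argument splits the trajectory into a \emph{dissipation regime} in which $|x_k|\ge\rho(|e_k|)$ (so the Lyapunov inequality of condition \ref{adaptado_4}) applies) and a \emph{residual regime} in which $|x_k|<\rho(|e_k|)$ (handled through conditions \ref{adaptado_1})--\ref{adaptado_3})).

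The central step is a geometric Lyapunov contraction. Whenever $x_k$ lies in the dissipation regime and inside a ball where the upper bound $V(x)\le\alpha_2(|x|)$ is valid, condition \ref{adaptado_4}) yields
\begin{equation*}
V(x_{k+1})\le V(x_k)-T_k\alpha_3(|x_k|)\le V(x_k)\Bigl(1-\tfrac{K_3}{K_2}T_k\Bigr),
\end{equation*}
since $\alpha_3(|x_k|)=K_3|x_k|^N\ge(K_3/K_2)V(x_k)$. Iterating this one-step contraction and using $1-s\le e^{-s}$ for $s\in[0,1)$ gives
\begin{equation*}
V(x_k)\le V(x_0)\exp\Bigl(-\tfrac{K_3}{K_2}\sum_{i=0}^{k-1}T_i\Bigr).
\end{equation*}
Together with $K_1|x_k|^N\le V(x_k)$ and $V(x_0)\le K_2|x_0|^N$, taking $N$-th roots produces
\begin{equation*}
|x_k|\le\Bigl(\tfrac{K_2}{K_1}\Bigr)^{1/N}|x_0|\exp\Bigl(-\tfrac{K_3}{NK_2}\sum_{i=0}^{k-1}T_i\Bigr),
\end{equation*}
which has exactly the form $K|x_0|e^{-\lambda t}$ demanded by the SE-ISS-VSR definition.

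To finish, I have to glue the two regimes. Conditions \ref{adaptado_2}) and \ref{adaptado_3}) let me construct an ISS gain $\gamma\in\K_\infty$ such that a one-step update starting in the residual regime remains uniformly bounded by $\gamma(\sup_i|e_i|)$, and condition \ref{adaptado_1}) guarantees equilibrium preservation so the exponential estimate is sharp when $e\equiv 0$. By induction on exit/re-entry epochs, at any step $k$ the state either remains in the dissipation regime (inheriting pure exponential decay from $x_0$) or has re-entered the residual regime since its last exit (inheriting the $\gamma$ term from that re-entry). Combining the two bounds additively yields \eqref{eq:S-ISS-VSR} with an exponential $\beta$, where the maximum admissible sampling period $T^\star(M,E)$ is chosen as the minimum of $\mathring T$, $\hat T$, $\check T(M,E)$, $\tilde T(M,E)$ and a further bound enforcing $1-(K_3/K_2)T_k\in(0,1)$.

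The main obstacle I anticipate is the semiglobal bookkeeping. The upper Lyapunov bound in condition \ref{adaptado_4}) is only valid on a ball of radius $M$, yet the geometric contraction above only implies $|x_k|\le(K_2/K_1)^{1/N}|x_0|$, which may exceed $M$. One must therefore instantiate $V=V_{M',E}$ with some $M'\ge(K_2/K_1)^{1/N}M$ so that the estimate chain closes consistently, and simultaneously verify via conditions \ref{adaptado_2})--\ref{adaptado_3}) that the residual-regime one-step jumps do not push the trajectory outside this enlarged ball. After that adjustment, the remaining modifications to the proof in \cite{VALLARELLA201860} are routine, as the theorem statement anticipates.
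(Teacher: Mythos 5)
Your proposal is correct and follows essentially the same route as the paper: the paper's proof simply invokes the proof of Theorem~3.2 of the earlier work it adapts and observes that the power-law form of $\alpha_1,\alpha_2,\alpha_3$ makes $\alpha_3\circ\alpha_2^{-1}$ linear, so the resulting $\KL$ bound specializes to $\beta(s,t)=Ks\exp(-\lambda t)$ --- exactly the geometric contraction $V(x_{k+1})\le V(x_k)\bigl(1-(K_3/K_2)T_k\bigr)$ followed by the $N$-th-root step that you carry out explicitly. Your regime-splitting and semiglobal bookkeeping reconstruct what that cited proof already does, and your decay rate $K_3/(NK_2)$ agrees with the paper's (up to an apparent typo there), so no further ideas are needed.
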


\subsection{Sufficient conditions for REPMC}
\label{subsec:suf-REPMC}

In Lemma~\ref{lemma:REPMC_SUFFICIENT}, we prove that REPC is
a sufficient condition for 
REPMC.
Whether REPC is also necessary for REPMC remains as an open problem.
We additionaly show that
REPC is not a restrictive condition by proving
in Theorem~\ref{lemma:REPMC_SUFFICIENT_APROX}
that 
any explicit and consistent Runge-Kutta model is REPC with the exact discrete-time model.

\begin{lemma}
\label{lemma:REPMC_SUFFICIENT}
Suppose that the pair $(\bar F^a, \bar F^b)$
is REPC, then the pair is REPMC. 
\end{lemma}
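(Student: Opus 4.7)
The plan is to construct the REPMC data $(\alpha, T^*, \phi)$ directly from the REPC bound and then control the iterated composition $\alpha^k(0,\{T_i\})$ by a discrete Grönwall-type estimate. Fix $M,E \geq 0$ and $\T, \eta > 0$, and let $\phi \in \Kinf$, $K := K(M,E) > 0$, $\rho \in \Kinf$, $T^*_0 := T^*(M,E) > 0$ be the constants and functions supplied by REPC. I would define
\[ \alpha(\delta, T) := (1 + K T)\,\delta + T\rho(T)\bigl(M + \phi(E)\bigr). \]
For each $T$, this is affine and strictly increasing in $\delta$, so $\alpha(\cdot,T)$ is non-decreasing. Since $|x^a|, |x^b| \leq M$ forces $\max\{|x^a|, |x^b|\} \leq M$ and $|e| \leq E$ forces $\phi(|e|) \leq \phi(E)$, the REPC inequality \eqref{eq:kraftwerk} immediately yields the one-step implication \eqref{eq:orwell} for all $T \in (0, T_0^*)$.

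The next step is to unroll the recursion. Setting $a_0 := 0$ and $a_{j+1} := (1+KT_j)\,a_j + T_j \rho(T_j)\bigl(M + \phi(E)\bigr)$ so that $a_k = \alpha^k(0, \{T_i\})$, a short induction gives
\[ a_k = \bigl(M + \phi(E)\bigr) \sum_{j=0}^{k-1} T_j \rho(T_j) \prod_{i=j+1}^{k-1}(1 + K T_i). \]
Using $1 + x \leq e^x$ together with $\sum_{i=0}^{k-1} T_i \leq \T$ gives $\prod_{i=j+1}^{k-1}(1 + K T_i) \leq e^{K\T}$, and the monotonicity of $\rho$ together with $T_j < T^*$ gives $\rho(T_j) \leq \rho(T^*)$; combining these estimates,
\[ a_k \leq \bigl(M + \phi(E)\bigr)\, e^{K\T}\, \rho(T^*)\, \T. \]

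Finally, since $\rho \in \Kinf$ is continuous at the origin with $\rho(0) = 0$, the quantity $e^{K\T}\rho(T^*)\T$ tends to $0$ as $T^* \to 0^+$, while $K$ and $\T$ stay fixed. I would then choose $T^* = T^*(M,E,\T,\eta) \in (0, T_0^*)$ small enough that $e^{K\T}\rho(T^*)\T \leq \min\{\eta, 1\}$. This gives $a_k \leq \min\{\eta,1\}\bigl(M + \phi(E)\bigr) \leq \eta M + \phi(E)$, which is exactly \eqref{eq:trap}, and it uses the very same $\phi \in \Kinf$ supplied by REPC. The main (mild) technical obstacle is the uniform control of the product $\prod(1+KT_i)$ over all admissible sampling sequences with $\sum T_i \leq \T$; this is handled cleanly by $1+x \leq e^x$, after which the argument reduces to shrinking $T^*$ so that $\rho(T^*)\,\T$ absorbs the remaining prefactor.
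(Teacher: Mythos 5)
Your proposal is correct and follows essentially the same route as the paper: the same choice of $\alpha(\delta,T)=(1+KT)\delta+T\rho(T)(M+\phi(E))$, the same unrolled sum--product formula, the bound $1+x\le e^x$, and the shrinking of $T^*$ so that $\rho(T^*)e^{K\T}\T\le\min\{\eta,1\}$ (the paper writes this choice explicitly as $T^*\le\min\{\rho^{-1}(\eta/(e^{K\T}\T)),\rho^{-1}(1/(e^{K\T}\T))\}$, which is the same condition).
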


\begin{proof}
Let $M,E\geq0$ and $\T,\eta>0$ be given.
Let $|x^a|,|x^b|\leq M$ be such that $|x^a-x^b|\leq \delta$.
Let $M, E \geq0$
generate $K,T^i>0$ and $\rho \in \K_\infty$
according to Definition~\ref{def:HOA}. 
Define 
\begin{equation*}
    T^*:=\min \left \{T^i,
    \rho^{-1}\left(\frac{\eta}{e^{K\T}\T}\right),
    \rho^{-1}\left(\frac{1}{e^{K\T}\T}\right)
    \right \}
\end{equation*}
and $\alpha(\delta,T):= (1+KT)\delta+ T\rho(T) (M+  \phi(E))
$,
then \eqref{eq:orwell} is satisfied.
For all $k\in \N$ such that 
$\sum_{i=0}^{k-1} T_i \leq \T$ with $\{T_i\}\in \Phi(T^*)$ we have 
\begin{align}
&\alpha^k(0,\{T_i\}) =  \notag \\
&=  \left(\sum_{j=0}^{k-2} T_j \rho(T_j)
\prod_{i=j+1}^{k-1} (1+K T_i)+T_{k-1}\rho(T_{k-1}) \right)(M+  \phi(E)) \notag \\
 &\leq \left(\sum_{j=0}^{k-2} T_j \rho(T_j)
 e^{K \sum_{i=j+1}^{k-1}T_i}+T_{k-1}\rho(T_{k-1}) \right)
 (M+  \phi(E))
 \notag \\
 &\leq
  \rho(T^*)e^{K \T}\left(\sum_{j=0}^{k-2} T_j 
+T_{k-1} \right)(M+  \phi(E)) \notag \\
 &\leq
  \rho(T^*)e^{K \T} \T M + \rho(T^*)e^{K \T} \T  \phi(E) \leq \eta M + \phi(E).
  \label{eq:sakamoto}
\end{align}
This concludes the proof.
\end{proof}

Next, we derive a bound for the mismatch between 
the exact model and the Euler approximate model. 
Lemma~\ref{lema:21} is used in the proof of 
Theorem~\ref{lemma:REPMC_SUFFICIENT_APROX} and its proof is given in \ref{app:proof:lema:21}.
\begin{lemma}
\label{lema:21}
Suppose that Assumptions~\ref{def:UniLipschitz} and~\ref{def:bounded} hold.
Then, for every compact sets 
$\X \subset \R^n$ and $\U \subset \R^m$
there exist constants $\bar T=\bar T(\X,\U)>0$ and $\bar L=\bar L(\X,\U)>0$ such that
\begin{equation}
\left|F^e(\xi,u,T)-F^{Euler}(\xi,u,T)\right|\leq \bar L T^2 \left|f(\xi,u)\right|
\label{eq:desi1}
\end{equation}
for all $\xi \in \X$, $u\in \U$ and $T\in(0,\bar T)$.
\end{lemma}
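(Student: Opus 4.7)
The plan is to write the difference between the exact and Euler models as an integral, use Assumption~\ref{def:UniLipschitz} to convert that integral into a bound involving $|x(\tau)-\xi|$, and then use Assumption~\ref{def:bounded} together with a Gronwall-type estimate to replace $|x(\tau)-\xi|$ by $\tau |f(\xi,u)|$, which yields the desired $T^2|f(\xi,u)|$ bound.

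More concretely, given $\X$ and $\U$ I would first enlarge $\X$ to a compact set $\X'\supset\X$ (for instance $\X':=\{x:d(x,\X)\le 1\}$), let $L:=L(\X',\U)$ be the Lipschitz constant from Assumption~\ref{def:UniLipschitz} and $C_f:=C_f(\max_{x\in\X'}|x|,\max_{u\in\U}|u|)$ be the bound from Assumption~\ref{def:bounded}. Choosing $\bar T\le 1/C_f$ ensures that for any $\xi\in\X$, $u\in\U$ and $T\in(0,\bar T)$, the solution $x(\tau)$ of $\dot x=f(x,u)$ with $x(0)=\xi$ remains inside $\X'$ for all $\tau\in[0,T]$ (since $|x(\tau)-\xi|\le\tau C_f\le 1$). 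Because
\begin{equation*}
F^e(\xi,u,T)-F^{Euler}(\xi,u,T)=\int_0^T\bigl[f(x(\tau),u)-f(\xi,u)\bigr]\,d\tau,
\end{equation*}
Assumption~\ref{def:UniLipschitz} gives
\begin{equation*}
|F^e(\xi,u,T)-F^{Euler}(\xi,u,T)|\le L\int_0^T|x(\tau)-\xi|\,d\tau.
\end{equation*}

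The key step, and the one place where the result is not immediate, is to sharpen the naive estimate $|x(\tau)-\xi|\le \tau C_f$ into $|x(\tau)-\xi|\le \tau e^{L\tau}|f(\xi,u)|$ so that the factor $|f(\xi,u)|$ appears on the right-hand side of \eqref{eq:desi1} (this is essential, since the inequality must hold with a zero right-hand side whenever $f(\xi,u)=0$). For this I would start from $x(\tau)-\xi=\int_0^\tau f(x(s),u)\,ds$, add and subtract $f(\xi,u)$ inside the integrand, apply Assumption~\ref{def:UniLipschitz} to obtain
\begin{equation*}
|x(\tau)-\xi|\le \tau|f(\xi,u)|+L\int_0^\tau|x(s)-\xi|\,ds,
\end{equation*}
and then invoke Gronwall's inequality to conclude $|x(\tau)-\xi|\le \tau|f(\xi,u)|e^{L\tau}$. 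Substituting this into the previous display and integrating gives
\begin{equation*}
|F^e(\xi,u,T)-F^{Euler}(\xi,u,T)|\le \tfrac{L\,e^{L\bar T}}{2}\,T^2|f(\xi,u)|
\end{equation*}
for all $T\in(0,\bar T)$, so setting $\bar L:=\tfrac{L e^{L\bar T}}{2}$ yields \eqref{eq:desi1}. The main (mild) obstacle is carrying out the Gronwall step without appealing to any differentiability of $f$ beyond the Lipschitz property, which is exactly what the above argument avoids.
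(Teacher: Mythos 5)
Your proposal is correct and follows essentially the same route as the paper: write $F^e-F^{Euler}$ as $\int_0^T[f(x(\tau),u)-f(\xi,u)]\,d\tau$, confine the solution to an enlarged compact set for small $T$, apply the Lipschitz bound, and use Gronwall to produce the factor $|f(\xi,u)|$, ending with the same constant $\bar L=\tfrac{1}{2}Le^{L\bar T}$. The only (immaterial) difference is that you apply Gronwall to $|x(\tau)-\xi|$ directly, whereas the paper rewrites $x(\tau)-\xi$ as the Euler error plus $\tau f(\xi,u)$ and applies Gronwall to the error itself.
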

An $s$-stage explicit Runge-Kutta model for (\ref{eq:cs}) is given by %
\begin{align}
y_1 = x, \quad
y_i &= x+T \sum_{j=1}^{i-1} a_{ij} f(y_j,u), \quad i=2,\hdots,s,  \label{charl}   \\
F^{RK}(x,u,T) &:=x+T \sum_{i=1}^{s} b_{i} f(y_i,u), \notag 
\end{align}
with $a_{ij},b_i \in \R$ for all required values of $i$ and $j$.
The Runge-Kutta model is said to be consistent if $\sum_{i=1}^s b_i =1$ \cite[Sec~3.2]{StuartDynSysAndNumAn}. %

\begin{theorem}
\label{lemma:REPMC_SUFFICIENT_APROX}
Consider that system \eqref{eq:cs} is fed back, under ZOH and possible nonuniform sampling, with the control law $U(x,e,T)$,
yielding the exact discrete-time model $\bar F^e(x,e,T)=F^e(x,U(x,e,T),T)$.
Let Assumptions~\ref{def:UniLipschitz}, \ref{def:bounded} and \ref{def:controllaw} hold and suppose that
\begin{enumerate}[i)]
    \item   \label{eq:fen0} 
    there exists $ \phi \in \K_\infty$ such that 
    for every $E\geq 0$ there exists $T^{i}:=T^{i}(E)>0$ such that
    for all $|e|\leq E$ and $T\in(0,T^{i})$ we have
    \begin{equation}
        |f(0,U(0,e,T))| \leq  \phi(|e|); 
        \label{eq:can}
    \end{equation}
    \item \label{eq:condicion} for every $M,E\geq 0$ there exist $K:=K(M,E)>0$ and $T^{ii}:=T^{ii}(M,E)>0$,
    with $K(\cdot,\cdot)$ nondecreasing in each variable and $T^{v}(\cdot,\cdot)$
    nonincreasing in each variable,
    such that for all $|x^a|,|x^b|\leq M$, $|e| \leq E$ and $T\in (0,T^{ii})$ we have
    \begin{align}
        |f(x^a,U(x^a,e,T))-f(x^b,U(x^b,e,T)) | \leq K |x^a-x^b|. 
        \label{eq:desigualdad}
    \end{align}
\end{enumerate}
Let $F^{\text{RK}}$ denote any explicit Runge-Kutta 
model for \eqref{eq:cs} and $\bar F^{\text{RK}}$ the corresponding closed-loop model involving $U(x,e,T)$. Then,
$(\bar F^{\text{RK}},\bar F^e)$ is REPC.
\end{theorem}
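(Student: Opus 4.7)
The plan is to exploit transitivity of REPC (Proposition~\ref{cor:1}) by factoring the comparison through the closed\-/loop Euler model $\bar F^{Euler}$. Specifically, I would establish separately that $(\bar F^{e}, \bar F^{Euler})$ is REPC and that $(\bar F^{RK}, \bar F^{Euler})$ is REPC, and conclude via the symmetry of REPC together with Proposition~\ref{cor:1}. Fix $M,E\ge 0$ throughout; Assumption~\ref{def:controllaw} supplies $T^u,C_u>0$ so that $|U(x,e,T)|\le C_u$ on the relevant set, Assumptions~\ref{def:UniLipschitz}--\ref{def:bounded} then give a Lipschitz constant $L$ (uniform in $u$) and a bound $C_f$ for $|f|$ on appropriate compact sets that also contain the Runge\-/Kutta intermediate stages (see below).

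The key elementary bound that I would use throughout is: by hypothesis (ii) with $x^b=0$ combined with hypothesis (i),
\begin{equation*}
|f(x,U(x,e,T))| \le K|x|+\phi(|e|),
\end{equation*}
valid for $|x|\le M$, $|e|\le E$ and $T$ small, which encodes the equilibrium\-/preserving feature REPC demands. For $(\bar F^e,\bar F^{Euler})$, I add and subtract $\bar F^{Euler}(x^a,e,T)$. The term $|\bar F^{Euler}(x^a,e,T)-\bar F^{Euler}(x^b,e,T)|$ is bounded by $(1+KT)|x^a-x^b|$ directly from hypothesis (ii). The term $|\bar F^e(x^a,e,T)-\bar F^{Euler}(x^a,e,T)|$ is bounded via Lemma~\ref{lema:21} by $\bar L T^2 |f(x^a,U(x^a,e,T))| \le \bar L T^2(K|x^a|+\phi(|e|))$, which is of the form $T\rho(T)(\max\{|x^a|,|x^b|\}+\phi(|e|))$ with $\rho(T)=\bar L(K+1)T$. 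Together this gives REPC for the pair $(\bar F^e,\bar F^{Euler})$.

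For $(\bar F^{RK},\bar F^{Euler})$, I again split via $\bar F^{Euler}(x^a,e,T)$; the second term is handled as above. For the first, with $u=U(x^a,e,T)$ fixed over the step, consistency $\sum_i b_i=1$ yields $F^{RK}(x^a,u,T)-F^{Euler}(x^a,u,T)=T\sum_i b_i[f(y_i,u)-f(x^a,u)]$, hence by Assumption~\ref{def:UniLipschitz}
\begin{equation*}
|F^{RK}(x^a,u,T)-F^{Euler}(x^a,u,T)| \le LT\sum_i|b_i|\,|y_i-x^a|.
\end{equation*}
Since $|y_i-x^a|\le T\sum_{j<i}|a_{ij}|\,|f(y_j,u)|$, the task reduces to bounding $\max_j|f(y_j,u)|$ in terms of $|x^a|$ and $\phi(|e|)$. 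I would do this by induction on the stage $j$: using Lipschitzness, $|f(y_j,u)|\le L|y_j-x^a|+|f(x^a,u)|\le L|y_j-x^a|+K|x^a|+\phi(|e|)$, and choosing $T$ small enough (depending on $L$ and the Butcher tableau) that the induced discrete Gronwall\-/type recursion is contractive, one obtains $|y_j-x^a|\le C'T(K|x^a|+\phi(|e|))$ and hence $|f(y_j,u)|\le C''(K|x^a|+\phi(|e|))$. Substituting back yields a bound of the form $C''' T^2(K|x^a|+\phi(|e|))$, which fits the REPC shape with a suitable $\rho\in\Ki$.

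The main obstacle is the third\-/paragraph bookkeeping: choosing $T^*$ small enough that every $y_j$ remains inside a fixed compact set (so that $L$ and $C_f$ apply uniformly), that the Gronwall\-/type recursion for $|y_j-x^a|$ contracts, and that the resulting constants in $|f(y_j,u)|$ end up scaling \emph{linearly} in $|x^a|$ and $\phi(|e|)$ rather than in absolute constants depending on $M,E$ only — this linear scaling is what turns a mere consistency estimate into the equilibrium\-/preserving bound required by Definition~\ref{def:HOA}. Once both pairs are shown REPC with (possibly different) $\Ki$ functions $\phi_1,\phi_2$ derived from hypothesis (i), transitivity via Proposition~\ref{cor:1} delivers REPC for $(\bar F^{RK},\bar F^e)$ with $\phi:=\phi_1+\phi_2\in\Ki$.
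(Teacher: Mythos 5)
Your proposal follows essentially the same route as the paper's proof: factor the comparison through the closed\-/loop Euler model, establish that $(\bar F^{e},\bar F^{Euler})$ is REPC via Lemma~\ref{lema:21} together with hypotheses \ref{eq:fen0}) and \ref{eq:condicion}) (yielding the bound $|f(x,U(x,e,T))|\le K|x|+\phi(|e|)$), establish that $(\bar F^{RK},\bar F^{Euler})$ is REPC via consistency $\sum_i b_i=1$ and a stagewise Gronwall\-/type recursion that keeps the estimate linear in the state and $\phi(|e|)$, and conclude by transitivity (Proposition~\ref{cor:1}). The only cosmetic differences are that the paper telescopes over consecutive stage differences $y_j-y_{j-1}$ instead of $y_i-x$ and controls the stage magnitudes with an explicit recursion for bounds $M_i$, but the substance --- including the crucial equilibrium\-/preserving linear scaling you single out --- is identical.
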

The proof of Theorem~\ref{lemma:REPMC_SUFFICIENT_APROX} is given in~\ref{app:proof:lem:repmcsufapp}. Theorem~\ref{lemma:REPMC_SUFFICIENT_APROX} gives sufficient conditions 
for REPC (and, via Lemma~\ref{lemma:REPMC_SUFFICIENT}, also for REPMC) between 
any explicit and consistent Runge-Kutta
model and the exact model, based on conditions on the
continuous-time plant and on the control law.
Assumptions~\ref{def:UniLipschitz} to~\ref{def:controllaw} and condition \ref{eq:fen0}) consist in mild boundedness and continuity requirements. 
Condition~\ref{eq:condicion}) is also a type of continuity requirement and allows to ensure uniqueness of solutions of the closed-loop
continuous-time model. 
Given that REPC is transitive 
it is evident that under the assumptions of Theorem~\ref{lemma:REPMC_SUFFICIENT_APROX} all explicit and consistent
Runge-Kutta models are also REPC with each other.
In particular, since the Euler model is the simplest explicit Runge-Kutta model, we have
that $(\bar F^{Euler}, \bar F^e)$ is REPC.

\begin{remark}
Theorem~\ref{lemma:REPMC_SUFFICIENT_APROX}
does not explicitly require differentiability
of the function $f$ that defines the continuous\-/time plant but
only Lipschitz-type conditions. The latter conditions may imply almost-everywhere differentiability but only of first order. Therefore, the requirements imposed by Theorem~\ref{lemma:REPMC_SUFFICIENT_APROX} on $f$ are weaker than the high-order differentiability required to ensure convergence of a high-order Runge-Kutta model.
\end{remark}

\subsection{Intersample bound}
\label{app:inter}
Once any type of S-ISS-VSR property (e.g. SE-ISS-VSR) is established 
for the exact discrete-time closed-loop  model 
we can then derive a bound for the intersample behaviour for the sampled-data system.

\begin{lemma}
\label{lema_interbound}
Consider that $x_{k+1}= F^e(x_k,U(x_k,e_k,T_k),T_k)$ is S-ISS-VSR with functions $\beta \in \KL$ and $\gamma \in \K_\infty$. Then, for 
the closed-loop sampled-data system given by \eqref{eq:cs} and $U(x_k,e_k,T_k)$ under ZOH we have
\begin{align}
\label{RRRR5}
\left|x(t_k+t)\right| %
&\leq \beta\left(|x_0|,\sum_{i=0}^{k-1}T_i\right)+\gamma\left(\sup_{0\le i \le k-1}|e_i| \right )+ C(R,E)t.
\end{align}
for all $t \in [0, T_k]$, $\{T_i\} \in \Phi(T^\star)$, $|x_0|\leq M$ and $\|\{e_i\}\| \leq E$. 
If additionally, the control law is independent of the current sampling period, i.e. $U(x_k,e_k)$, we have
\begin{equation}
    |x(t)| \leq \beta\left(|x_0|,t\right)+\gamma\left(\sup_{0\le i \le k-1}|e_i|\right )  
    \label{eq:R6}
\end{equation}
for all $t\geq 0$, $\{T_i\} \in \Phi(T^\star)$, $|x_0|\leq M$ and $\|\{e_i\}\| \leq E$. 
\end{lemma}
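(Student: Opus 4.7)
The plan is to split the analysis into a sample\-/time bound (obtained directly from the hypothesized S-ISS-VSR of the exact model) and an intersample excursion bound (obtained by a direct integration of the continuous\-/time dynamics against a uniform bound on $f$). For the sharper assertion \eqref{eq:R6}, I would exploit the fact that a $T$-independent controller allows the intersample point $t_k+t$ itself to be regarded as a valid sampling instant, so that the S-ISS-VSR bound applied to a modified sampling sequence yields the desired estimate directly.

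For \eqref{RRRR5}, fix $M,E\ge 0$. The hypothesis supplies $T^\star=T^\star(M,E)>0$ so that the exact model satisfies \eqref{eq:S-ISS-VSR}, which in particular yields a uniform bound $R:=\beta(M,0)+\gamma(E)$ on $|x_k|$ for all sampling instants $k$. Assumption~\ref{def:controllaw}, after possibly shrinking $T^\star$, then bounds $|u_k|=|U(x_k,e_k,T_k)|$ by $C_u(R,E)$, and Assumption~\ref{def:bounded} bounds $|f(\xi,u_k)|$ by $C_f(R+1,C_u(R,E))=:C(R,E)$ on the closed ball of radius $R+1$. A routine bootstrap argument, shrinking $T^\star$ once more so that $T^\star\,C(R,E)\le 1$, confirms that the continuous\-/time solution cannot leave this ball over $[t_k,t_k+T_k]$; integrating $\dot x=f(x,u_k)$ from $x(t_k)=x_k$ then gives $|x(t_k+t)-x_k|\le C(R,E)\,t$ for $t\in[0,T_k]$. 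Combining this linear excursion bound with the S-ISS-VSR estimate on $|x_k|$ produces \eqref{RRRR5}.

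For \eqref{eq:R6}, the key observation is that when $U$ is independent of the sampling period, the ZOH control applied on $[t_k,t_k+t]$ is $U(x_k,e_k)$ regardless of whether the ``next'' sampling instant is declared at $t_k+T_k$ or at $t_k+t$ for any $t\in[0,T_k)$. Consequently the continuous\-/time value $x(t_k+t)$ coincides with the $(k{+}1)$\-/st iterate of the exact closed\-/loop discrete\-/time model driven by the modified sampling\-/period sequence $(T_0,\dots,T_{k-1},t,\tau,\tau,\dots)$ (any $\tau\in(0,T^\star)$) together with the unchanged error sequence. Since this modified sequence still lies in $\Phi(T^\star)$ and has the same initial condition, applying the S-ISS-VSR bound at step $k{+}1$ of the modified system replaces $\sum_{i=0}^{k-1}T_i$ by $\sum_{i=0}^{k-1}T_i+t=t$ inside $\beta$, yielding \eqref{eq:R6}.

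I expect no serious conceptual obstacle here; the only genuine care needed is in coordinating the choice of $T^\star$ so that, simultaneously, (i) the hypothesized S-ISS-VSR bound is in force, (ii) Assumption~\ref{def:controllaw} delivers the uniform control bound, and (iii) the bootstrap on $|x|$ over one sampling period closes. All three constraints merely ask $T^\star$ to be small and are therefore compatible, so the proof reduces to bookkeeping once the reinterpretation trick for part~(b) is in hand.
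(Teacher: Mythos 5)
Your proposal is correct and follows essentially the same route as the paper: the sample-time bound comes from the hypothesized S-ISS-VSR estimate (giving $R=\beta(M,0)+\gamma(E)$), the intersample excursion is controlled by bounding $|U|$ via Assumption~\ref{def:controllaw} and $|f|$ via Assumption~\ref{def:bounded} together with a bootstrap keeping the trajectory in a fixed ball over one period, and \eqref{eq:R6} is obtained by the same reinterpretation of the intersample instant as an admissible sampling instant that the paper uses. The only differences are cosmetic (ball of radius $R+1$ versus $2R$, and the corresponding choice of the shrunken maximum sampling period).
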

For the particular case where  the control action
is independent of the current sampling period,  
such as in the emulation case,
the evolution between consecutive samples $x_k$ and $x_{k+1}$ is determined by $x(t_k+t)=F^e(x_k,U(x_k,e_k),t)$ for all $t\in [0,T_{k}]$.
The values that $x(t)$ takes in the interval $[t_k,t_{k+1}]$
are given by the open-loop exact discrete-time model under the constant input $U(x_k,e_k)$, regardless of the value of $t_{k+1} = t_k + T_k$.
Given that 
the bound \eqref{eq:S-ISS-VSR} holds for every possible sequence of sampling periods, 
it then straightforwardly follows that the 
bound that is ensured for the exact discrete-time model
also holds for the sampled-data system with the same functions $\beta \in \KL$,
$\gamma \in \K_\infty$ and maximum admissible sampling period $T^\star$.
An intersample bound \eqref{RRRR5} leads to a bound of the form \eqref{eq:R6}
with different functions $\beta$ and $\gamma$ by repeating the bound \eqref{eq:R6} and considering each sampling instant as a new initial time.
However, the precise way in which the maximum admissible sampling period depends on such a decreasing intersample bound is not at all straightforward and the derivation of the functions for the general case is thus left for future work. The proof of Lemma \ref{lema_interbound} is in \ref{app_lema_interbound}.

\section{Example}
\label{sec:example}
Consider the
continuous-time plant $\dot x= f(x,u)=x^3+u$ in 
Example~A of \cite{IEEETAC18}.
Note that the solution of the open-loop continuous-time plant
may not exist for all times due to finite escape time. 
We next perform discrete-time design based on the use of Runge-Kutta models.
We consider the Euler model of the open-loop plant
\begin{equation}
F^{Euler}(x,u,T):=x+T (x^3+u).   \label{C1} 
\end{equation}
We also consider a desired closed-loop continuous-time stable system, e.g. 
$\dot x = f_d(x)= -x^3-2x$,
which we approximate by means of a second-order Runge-Kutta model,
namely the Heun model:
\begin{align}
\bar F^{Heun}&(x,T) :=x+\frac{T}{2} (f_d(x)+f_d(x+T f_d(x))) \notag \\ 
&= \frac{1}{2}(x^3 + 2x)^3T^4  -\frac{3}{2}x(x^3 + 2x)^2T^3 
 \label{C2}  \\
&+ (2x + (3x^2(x^3 + 2x))/2 + x^3)T^2 - ( x^3 + 2x)T + x. \notag
\end{align}
Matching equations \eqref{C1} and \eqref{C2} and solving for $u$ we obtain
the following control law
for the disturbance-free case
\begin{equation}
u=U(x,0,T) :=\frac{\bar F^{Heun}(x,T) - x}{T} -x^3.\label{A3}
\end{equation}
Thus, we ensure that the behaviour of the Euler closed-loop model $x_{k+1}=\bar F^{Euler}(x_k,0,T_k)= F^{Euler}(x_k,U(x_k,0,T_k),T_k)$
is described by \eqref{C2}.
If we consider additive state-measurement errors, this yields
\begin{subequations}
     \label{C5C}
    \begin{align}
\bar F^{Euler}(x,e,T)&:=x+T (x^3+U(x,e,T))   \label{C5}  \\
U(x,e,T)&=\frac{\bar F^{Heun}(x+e,T) - x-e}{T} -(x+e)^3 \label{D3}.
    \end{align}
\end{subequations}
We will prove that \eqref{C5C} is SE-ISS-VSR
via Theorem \ref{adaptado}.
The continuity and boundedness assumptions \ref{adaptado_1}), \ref{adaptado_2}) and \ref{adaptado_3}) of Theorem \ref{adaptado} 
are easy to verify for \eqref{C5C}.
Now we will prove assumption \ref{adaptado_4}).
Define $\alpha_1, \alpha_2, \alpha_3 \in \K_\infty$ 
via $\alpha_1(s)=\alpha_2(s)=s^2$,
$\alpha_3(s)=s^2/2$
and $\rho(s) = s/K$ with $K > 0$ to be selected.
Let $M \geq 0$ and $E \geq 0$ be
given and define $V(x) = x^2$.
We have
\begin{align}
&V(\bar F(x,e,T))-V(x)= \sum_{i=1}^8
a_i(x,e) T^i  \notag \\
&= \left[a_1(x,e)+ T\sum_{i=2}^8 a_i(x,e) T^{i-2}\right]T \leq \left[a_1(x,e)+ T\sum_{i=2}^8 |a_i(x,e)| \right]T \notag 
\end{align}
for all $T \in (0,1)$,
where each $a_i(x,e)$
is a multivariate polynomial
in the indeterminates $x, e$, and
$a_1(x,e)=- 4e^3x - 12e^2x^2 - 12ex^3 - 4ex - 2x^4 - 4x^2$.
Selecting $K=0.01$, noting that whenever $\rho(|e|)\leq |x|$  we have
$|e|\leq K |x|$ and taking absolute values on sign-indefinite terms
of $a_1(x,e)$ 
we can bound it as
\begin{align}
a_1(x,e)&\leq 4K^3 x^4 + 12K^2 x^4 + 12K x^4 + 4Kx^2 - 2x^4 - 4x^2\notag  \\ 
&\leq (-4+4K)x^2+(-2+28K)x^4 \leq -1.5 (x^2+x^4). \notag  
\end{align}
Defining $C=670$,
replacing the negative definite terms of each $a_i(x,e)$ by zero,
taking absolute values on sign-indefinite terms and
bounding each $e$ according to $|e|\leq K|x|$ we obtain 
$\sum_{i=2}^8 |a_i(x,e)|  \leq C \sum_{i=1}^{9}x^{2i}$
for all $\rho(|e|)\leq |x|\leq M$, $|e|\leq E$ and $T\in(0,1)$.
Select
$\tilde T= \min \left\{ \frac{1}{C \left(1+\sum_{i=1}^{8}M^{2i}\right)},1 \right\},
$
then
\begin{align*}
&a_1(x,e)+ T\sum_{i=2}^8 |a_i(x,e)| \leq 
-1.5 (x^2+x^4) +TC\sum_{i=1}^{9}x^{2i}  \notag \\
&\leq -0.5 x^2 +\left(TC \left(1+\sum_{i=1}^{8}M^{2i}\right)-1 \right)x^2 \leq -0.5x^2 = -\alpha_3 (|x|).
\end{align*}
for all $\rho(|e|)\leq |x|\leq M$, $|e|\leq E$
and $T\in(0, \tilde T)$.  

Next, we will prove that \eqref{C5C} is not globally stable. 
Suppose that there exists $T^* \in (0,1)$ such that 
the system $x_{k+1}=\bar F^{Euler}(x_k,0,T_k)$ is globally exponentially stable under VSR for all 
$\{T_i\} \in \Phi(T^*)$. Define $\bar T:=T^*/2$ and
consider the constant sequence $\{T_i\}\in \Phi(T^*)$, with $T_i = \bar T$ for all $i$.
For all $|x|>\sqrt{(3 -  \bar T)/(2\bar T)}$ we have
\begin{align*}
&|\bar F^{Euler}(x,0, \bar T)|= |x+\bar T(x^3+U(x,0, \bar T))| \notag \\
&= \left|\frac{\bar T}{2} \left[2x^2+1\right]\left[ \bar T(2x^2+1)-2\right]\left[2\bar T x^2(2 \bar T x^2+T-1)\right]+1 \right||x| \notag \\
&> \left|\frac{\bar T}{2} \left[2x^2+1\right]\left[4\bar T x^2\right]+1 \right||x|> |x|. 
\end{align*}
The solution thus diverges for large values of the state. 

Next, by means of Theorem~\ref{theorem:SEISS} we will prove that 
$\bar F^e$ exhibits the stronger SE-ISS-VSR property  
that could not be ensured by the existing results.
Theorem~2 of \citet{nevsic2009stability} cannot be applied
to prove asymptotic stability of $\bar F^e$ even in the absence of errors
due to the fact that the Euler model \eqref{C5C} is not globally stable.
Theorem~1 of \citet{IEEETAC18} can be applied
but only to ensure semiglobal practical (not asymptotic) ISS-VSR.

\begin{figure}[ht]
\vspace{-0.4cm}
\includegraphics[width=0.49\textwidth]{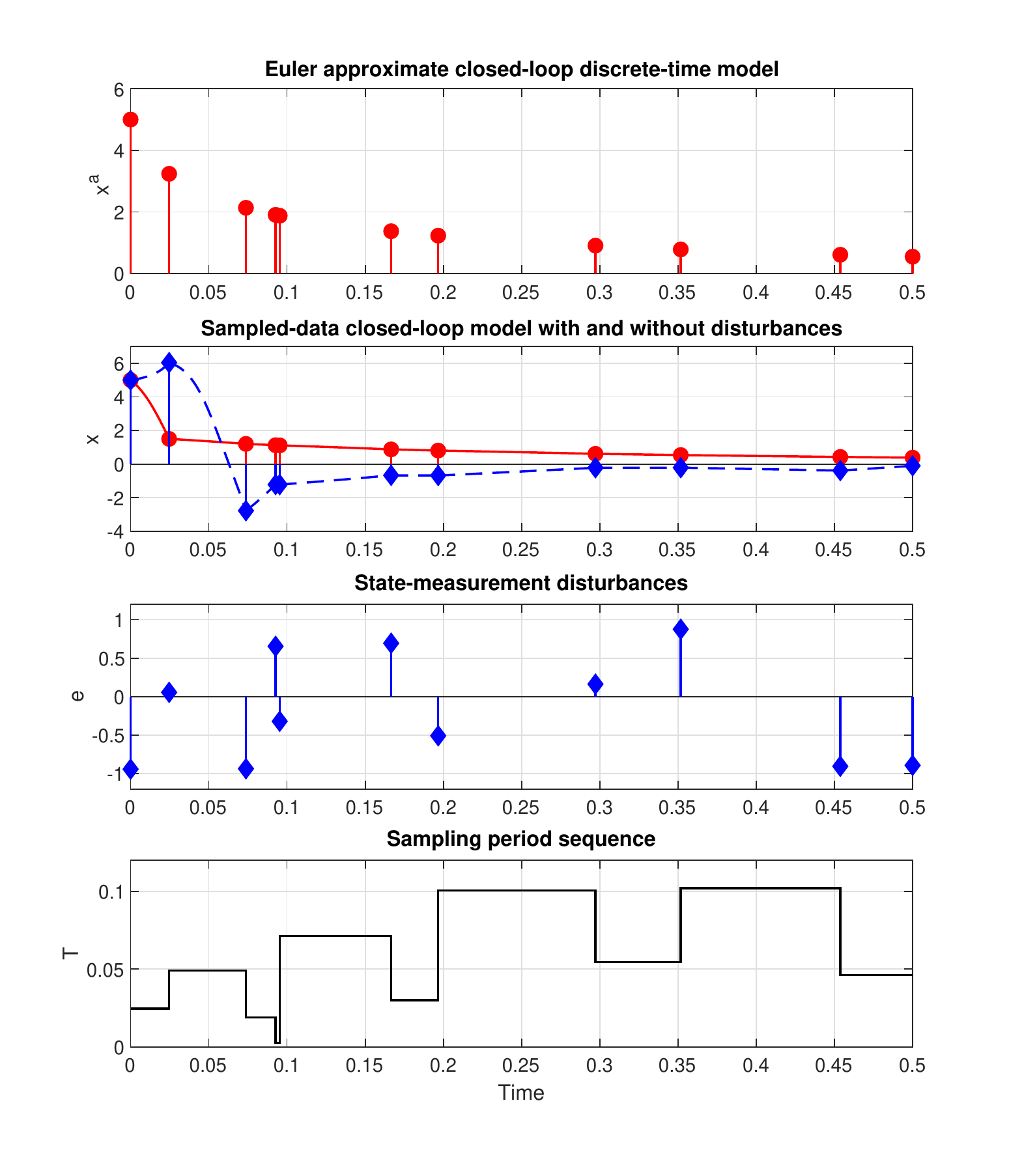}
\vspace*{-11mm}
\caption{
Evolution of the sampled-data system for the disturbance-free case (red solid line) and
in the presence of state-measurement disturbances (blue dashed line) for a given sequence of sampling periods. The state of the approximate and exact discrete-time models is represented with stems in both cases.}
\end{figure}

First, we prove that 
$(\bar F^{Euler},\bar F^e)$ is REPMC via 
Theorem~\ref{lemma:REPMC_SUFFICIENT_APROX}.
Assumptions~\ref{def:UniLipschitz} to~\ref{def:controllaw}
are easy 
to verify for the 
plant $\dot x =f(x,u)$ and control law $U(x,e,T)$.
To prove conditions \ref{eq:fen0}) and \ref{eq:condicion}), 
define $\phi \in \K_\infty$ via $\phi(s):=s^9 + 3s^7 + 3s^5$, then
\begin{equation}
\left|f(0,U(0,e,T))\right|= \frac{1}{2}\left|e^9 + 3e^7+ 3e^5
\right|\leq|e|^9 + 3|e|^7 + 3|e|^5 = \phi(|e|) \notag
\end{equation}
for all $T \in (0,1)$,
thus  \ref{eq:fen0}) holds.
The function $f(x,U(x,e,T))$ is easily seen to be a multivariate polynomial in the variables $x, e, T$. Therefore, this function is locally Lipschitz in $x$, uniformly with respect to the other variables in compact sets and \ref{eq:condicion}) holds.
By Theorem~\ref{lemma:REPMC_SUFFICIENT_APROX},
$(\bar F^{Euler},\bar F^e)$ is REPC and by Lemma~\ref{lemma:REPMC_SUFFICIENT} also REPMC. 
By Theorem~\ref{theorem:SEISS} then 
$\bar F^e$
is SE-ISS-VSR.

In order to illustrate the results we simulated
the approximate Euler closed-loop model (used for control
design) and the original sampled-data model (both with and without disturbances) from initial 
condition $x(0)=5$ for the same
sequence of random sampling periods on a given interval.
We considered the case where random continuous uniformly distributed state-measurment disturbances $e \in [-1,1]$ are present.
The simulations in Figure 1 show the expected behaviour.

\section{Conclusions}
\label{sec:conclusions}

We have presented novel results that
guarantee the semiglobal exponential input-to-state stability (SE-ISS-VSR)
for discrete-time models of nonlinear nonuniformly sampled plants
under state-measurement or actuation-error disturbances
based on approximate discrete-time models.
We have proved that
under a multistep consistency property (REPMC)
between two discrete-time models the
SE-ISS-VSR property is carried
over between models.
We have shown that a much easier-to-verify one-step condition (REPC)
is a transitive property
and that it constitutes
a sufficient condition for REPMC.
Furthermore, we have proved that 
under mild boundedness and continuity conditions on the continuous-time model and the control law,
any explicit and consistent Runge-Kutta (approximate) model is REPC with the exact discrete\-/time model
and thus can be used for control design.
We have provided an example of semiglobal exponential stabilization
discrete-time design based on Runge-Kutta models.

Very recently, we proved that the simplest \emph{implicit} Runge-Kutta model (backward Euler), 
is also REPC with the exact model \citep{AADECA20}.
We conjecture that this holds also for all
implicit Runge-Kutta models and also for other types of well-known models. This is a topic for future work, as well as extending Theorem \ref{lemma:REPMC_SUFFICIENT_APROX} to the case of dynamic controllers.

\appendix


\section{Proof of Proposition~\ref{cor:1}}
\label{app:proof:cor:1}
Let the REPC property 
define  $\phi_1 \in \K_\infty$
and $\phi_2 \in \K_\infty$
for the  
the pairs  $(\bar F^a, \bar F^b)$ and $(\bar F^b, \bar F^c)$,
respectively.
Suppose $M,E \geq 0$ given
and let them generate
$K_1, T^{*,1}>0$, $\rho_1 \in \K_\infty$
and $K_2, T^{*,2}>0$, $\rho_2 \in \K_\infty$ according to Definition~\ref{def:HOA}
for the pairs  $(\bar F^a, \bar F^b)$ and $(\bar F^b, \bar F^c)$, respectively.
Consider $|x^a|,|x^c|\leq M$ and $|e|\leq E$ given.
Define
$K:=K_1$,
$T^*:=\min \{T^{*,1},T^{*,2}\}$
and $\rho,\phi \in \K_\infty$ via
$\rho:=\rho_1+\rho_2$ and
$\phi:=\phi_1+ \phi_2$.
Thus we have
\begin{align}
&\left|\bar F^a(x^a,e,T) - \bar F^c(x^c,e,T)\right| \notag \\
&\leq
\left|\bar F^a(x^a,e,T) - \bar F^b(x^c,e,T)\right|
+\left|\bar F^b(x^c,e,T) - \bar F^c(x^c,e,T)\right| \notag  \\
&\leq 
(1+K_1T) |x^a-x^c|+ T \rho_1(T) \left(\max\{|x^a|,|x^c|\}+\phi_1(|e|) \right) \notag \\
& \qquad +(1+K_2T)|x^c-x^c|+ T \rho_2(T) \left(\max\{|x^c|,|x^c|\}+\phi_2(|e|) \right) \notag \\
&\leq 
(1+KT) |x^a-x^c|+ T \rho_1(T) \left(\max\{|x^a|,|x^c|\}+\phi(|e|) \right) \notag \\
& \qquad+ T \rho_2(T) \left(\max\{|x^a|,|x^c|\}+\phi(|e|) \right) \notag \\
&\leq(1+KT)(|x^a-x^c|)  
+ T \rho(T)(\max\{|x^a|,|x^c|\}+ \phi(|e|)).
\label{eq:eded}
\end{align}
for all $|x^a|,|x^c|\leq M$, $|e|\leq E$ and $T\in(0,T^*)$
and  the pair $(\bar F^a, \bar F^c)$ is REPC.

\qed

\section{Proof of Lemma~\ref{lem:menem}}
\label{app:proof:menem}

Consider $M_a,E\geq0$ and
$\T,\eta > 0$ given.
Since 
$\bar F^a$
is REPMC with 
$\bar F^e$
define $M_e:= (1+\eta) M_a+ \phi(E)$ and
generate $T^*:=T^*(M_e,E,\T,\eta)>0$
and function $\alpha:\R_{\geq0} \times \R_{\geq0}
 \rightarrow \R_{\geq0} \cup \{\infty\}$
according to Definition~\ref{eq:default}.
Define $T^L:=T^*$ 
and consider
$\{T_i\} \in \Phi(T^L)$ and
$\|\{e_i\}\|\leq E$.
Define 
$\Delta x_k:=x^e(k,\xi,\{e_i\},\{T_i\})-x^a(k,\xi,\{e_i\},\{T_i\})$.
For $k=0$ we have 
\begin{equation}
|\Delta x_0|:=|\xi-\xi|=0
\leq \eta |\xi|+ \phi(E).
\end{equation}
We proceed by induction on $k$.
Let $k\in \N_0$
be such that $\sum_{i=0}^{k-1} T_i
\in [0,\T]$.
Suppose that 
$|x^a(j,\xi,\{e_i\},\{T_i\})|\leq M_a$ and 
$|\Delta x_j|\leq \eta |\xi|
+\phi(\sup_{0\leq i \leq j-1} |e_i|)$ for all $0\leq j\leq k$.
Thus $|x^e(j,\xi,\{e_i\},\{T_i\})| \leq M_a+\eta |\xi|+\phi(\sup_{0\leq i \leq j-1} |e_i|) \leq M_e$ for all $0\leq j\leq k$.
From \eqref{eq:orwell} and \eqref{eq:trap} 
and noting that by causality 
$\Delta x_{k+1}$ cannot depend on future
values of $e_i$
we have
\begin{align}
&|\Delta x_{k+1}|=|\bar F^e(x^e_k,e_k,T_k)-\bar F^a(x^a_k,e_k,T_k)| \leq \alpha(|\Delta x_k|,\{T_i\}) \notag \\
&\leq \alpha^{k+1}(|\Delta x_0|,\{T_i\}) =\alpha^{k+1}(0,\{T_i\}) \leq \eta |\xi|+\phi\left(\sup_{0 \leq i \leq k} |e_i|\right).\tag*{\qed}
\end{align}

\section{\textbf{Proof of Theorem~\ref{theorem:SEISS}}}
\label{app:proof:SEISS}
Let $K_a,\lambda_a>0$, $\gamma_a \in \K_\infty$ and $T^\star(\cdot,\cdot)$
characterize the SE-ISS-VSR property of 
$x^a_k=\bar F^a(x^a,e_k,T_k)$.
Consider $M\geq 0$ and $E\geq 0$ given.
Consider $\phi \in \K_\infty$ from \ref{item:hoa123333})
and define $\hat \gamma \in \K_\infty$ via
$\hat \gamma := \gamma_a + \phi$.
Let $\delta \in(0,1)$ and  $\eta\in(0,\delta)$. 
Define $M_a:= K_aM+\frac{1}{1-\delta}\hat \gamma(E)$.
Let $\T:=\frac{1}{\lambda_a} \ln{\frac{K_a}{\delta-\eta}}$.
Define $\T_1:=\T+1$ and generate
$T^L:= T^L(M_a,E,\T_1,\eta)$
according to Lemma~\ref{lem:menem}.
We have
\begin{equation}
|x^a_k|\leq  K_a |\xi| \exp{\left(-\lambda_a \sum_{i=0}^{k-1} T_i\right)}
+ \gamma_a \left( \sup_{0\leq i\leq k-1} |e_i| \right)
\end{equation}
for all $k\in \N_0$, $|\xi| \leq M_a$,
$\{T_i\} \in \Phi(T^\star(M_a, E))$ and
$\|\{e_i\}\| \leq E$.
Define
$\bar T<\min\left\{1,T^\star(M_a,E),
  T^L \right\}$.
Consider $\{T_i\} \in \Phi(\bar T)$.
For every $k\in\N_0$ and $j\in \N$, define
\begin{align}
s(k)&:= \sup \left\{r\in \N_0 : r \ge k+1, \sum_{i=k}^{r-1} T_i \leq \T_1 \right\} \text{ and} \\
s^j(k)&:=\overbrace{s(\hdots s(s}^j(k))) 
\end{align}
Note that $s^1(k) \ge k+1$ for all $k\in\N_0$ because $\T_1>1$ and $T_i < 1$ for all $i\in\N_0$. Also, $\sum_{i=k}^{s^1(k)-1} T_i > \T_1- \bar T > \T_1 -1= \T$ holds for all $k\in\N_0$.
For every $k,\ell \in \N_0$ with $k\geq \ell$ and 
$|\xi|\leq M_a$ define
$\Delta x^\xi_{k,\ell} :=
|x^b(k-\ell,\xi,\{e_{i+\ell}\},\{T_{i+\ell}\})
-x^a(k-\ell,\xi,\{e_{i+\ell}\},\{T_{i+\ell}\})|$.
Consider that $|\xi|\leq M$.
Then $|x^a_k| \leq M_a$ for all $k\in\N_0$,
$\{T_i\} \in \Phi (\bar T)$ and $\|\{e_i\}\|\leq E$.
From \ref{item:hoa123333}),
according to Lemma~\ref{lem:menem}, for all $k\in \N_0$
for which $\sum_{i=0}^{k-1}\leq \T_1$ and 
$\{T_i\} \in \Phi(\bar T)$ we have
\begin{align*}
|x^b&(k,\xi,x\{e_i\},\{T_i\})| 
 \leq |x^a(k,\xi,\{e_i\},\{T_i\})|+ \left|\Delta x^\xi_{k,0} \right| \notag \\
& \leq  K_a |\xi| \exp{\left(-\lambda_a \sum_{i=0}^{k-1} T_i\right)}
+ \gamma_a \left( \sup_{0\leq i\leq k-1} |e_i| \right) + \left|\Delta x^\xi_{k,0} \right| \notag \\
& \leq  K_a |\xi| \exp{\left(-\lambda_a \sum_{i=0}^{k-1} T_i\right)}
+ \gamma_a \left( \sup_{0\leq i\leq k-1} |e_i| \right) + \eta|\xi|+\phi
\left(\sup_{0 \leq i \leq k-1 } |e_i|\right)   \notag \\
& \leq  K_a |\xi| \left( \exp{\left(-\lambda_a \sum_{i=0}^{k-1} T_i\right)}
+ \frac{\eta}{K_a}\right) + \hat \gamma \left( \sup_{0\leq i\leq k-1} |e_i| \right).  \notag 
\end{align*}
For the sake of notation define $x^b_{s^j(0)}:=x^b(s^j(0),\xi,\{e_i\},\{T_i\})$.
For instant $s^1(0)$ we have
\begin{align}
|x^b_{s^1(0)}|
& \leq  K_a |\xi| \left( \exp{\left(-\lambda_a \sum_{i=0}^{s^1(0)-1} T_i\right)}
+ \frac{\eta}{K_a}\right) + \hat \gamma \left( \sup_{0\leq i\leq s^1(0)-1} |e_i| \right) \notag \\
& \leq  K_a |\xi| \left( \exp{\left(-\lambda_a  \T\right)}
+ \frac{\eta}{K_a}\right) + \hat \gamma \left( \sup_{0\leq i\leq s^1(0)-1} |e_i| \right)  \notag \\
& \leq  K_a |\xi| \left(\frac{\delta-\eta}{K_a}+\frac{\eta}{K_a}\right) +
\hat \gamma \left( \sup_{0\leq i\leq s^1(0)-1} |e_i| \right)  \notag \\
& \leq  \delta|\xi| +\hat \gamma \left(  \sup_{0\leq i\leq s^1(0)-1} |e_i| \right) \label{eq:pink_floyd}  \\  
&\leq  \delta M + \hat \gamma \left( E \right) \leq M_a
\notag
\end{align}
Note that for an initial condition such that
$|x^b_{s^j(0)}|\leq M_a$ for some $j\in \N_0$ then,
following the same reasoning that leads to \eqref{eq:pink_floyd},
we can bound $|x^b_{s^{j+1}(0)}|$
as 
\begin{equation}
|x^b_{s^{j+1}(0)}| \leq \delta |x^b_{s^{j}(0)}|+ \hat \gamma \left( E \right).     
\end{equation}
Thus, we have
$|x^b_{s^{j+1}(0)}| \leq \delta M_a+\hat \gamma (E) = \delta (K_aM+ \frac{1}{1-\delta} \hat \gamma(E))+ \hat \gamma(E) =
\delta K_aM+ (\frac{\delta}{1-\delta}+1) \hat \gamma(E))=\delta K_aM+ \frac{1}{1- \delta}\hat \gamma(E) < M_a$.
Thus $\left|x^b_{s^j(0)} \right| \leq M_a$ for all $j\in \N_0$.
Then we can apply \eqref{eq:pink_floyd} iteratively to obtain
\begin{align}
\left|x^b_{s^j(0)} \right| 
&\leq \delta^j |\xi|+\hat \gamma\left( \sup_{0\leq i\leq s^j(0)-1} |e_i| \right)\sum_{i=0}^{j-1} \delta^i 
 \notag \\
&= 
\exp{\left(- \lambda_1 j\right)} |\xi|+ \frac{1- \delta^{j}}{1-\delta} \hat \gamma
\left( \sup_{0\leq i\leq s^j(0)-1} |e_i| \right)
\label{eq:883}
\end{align}
where $ \lambda_1:= \ln{\frac{1}{\delta}}>0$. 
Using the definition of $s^j(0)$
we have
\begin{equation}
    -  \lambda_1 j = -  \lambda_1 j \frac{\T_1}{\T_1} \leq 
    -  \lambda_1 \frac{\sum_{i=0}^{s^j(0)-1} T_i}{\T_1}
    = -\bar \lambda \sum_{i=0}^{s^j(0)-1} T_i
    \label{eq:fin}
\end{equation}
where $\bar \lambda :=  \lambda_1/ \T_1$.
Using \eqref{eq:fin} on \eqref{eq:883} 
and the fact that for $\delta \in(0,1)$ it holds that
$\frac{1- \delta^{j}}{1-\delta} \leq
\frac{1}{1-\delta}$ for all $j\in N_0$,
we have
\begin{equation}
\left|x^e_{s^j(0)} \right|  \leq  \exp{\left(-\bar \lambda \sum_{i=0}^{s^j(0)-1} T_i\right)} |\xi|
+ \frac{1}{1-\delta} \hat  \gamma\left( \sup_{0\leq i\leq s^j(0)-1} |e_i| \right).
\label{eq:alx3}
\end{equation}
Define $x^b_k(\xi):=x^b(k,\xi,\{e_i\},\{T_i\})$.
From \ref{item:SES12}) and
Lemma~\ref{lem:menem},
for all $k\in[s^j(0),s^{j+1}(0)]$, we have that
\begin{align}
&\left|x^b_k(\xi)\right|  =\left|x^b(k-s^j(0),x^b_{s^j(0)},\{e_{i+s^j(0)}\},\{T_{i+s^j(0)}\})\right| \notag \\
& \leq \left|x^a(k-s^j(0),x^b_{s^j(0)},\{e_{i+s^j(0)}\},\{T_{i+s^j(0)}\}) \right| \notag \\
&\qquad +\big|x^b(k-s^j(0),x^b_{s^j(0)},\{e_{i+s^j(0)}\},\{T_{i+s^j(0)}\}) \notag \\ &\qquad -x^a(k-s^j(0),x^b_{s^j(0)},\{e_{i+s^j(0)}\},\{T_{i+s^j(0)}\}) \big| \notag \\
& \leq  K_a\left|x^b_{s^j(0)}\right|+ \gamma_a
\left( \sup_{s^j(0)\leq i\leq k-1} |e_i| \right) + \eta \left|x^b_{s^j(0)}\right|+ \phi\left( \sup_{s^j(0)\leq i\leq k-1} |e_i| \right) \notag \\
& \leq  (K_a+ \eta) \left|x^b_{s^j(0)}\right|+\hat \gamma
\left( \sup_{s^j(0)\leq i\leq k-1} |e_i| \right).
\label{eq:hefes2}
\end{align}
Using \eqref{eq:alx3} and \eqref{eq:hefes2},
for all $k\in[s^j(0),s^{j+1}(0)]$, we have 
\begin{align}
&|x^b_k(\xi)| 
\leq (K_a+ \eta) \left|x^b_{s^j(0)}\right|+ \hat \gamma
\left( \sup_{s^j(0)\leq i\leq k-1} |e_i| \right) \notag \\
&\leq (K_a+ \eta) \left( \exp{\left(-\bar \lambda \sum_{i=0}^{s^j(0)-1} T_i\right)} |\xi|
+ \frac{1}{1-\delta} \hat\gamma
\left( \sup_{0\leq i\leq s^j(0)-1} |e_i| \right)
\right) \notag \\
& \qquad +\hat\gamma
\left( \sup_{s^j(0)\leq i\leq k-1} |e_i| \right) \notag \\
&\leq (K_a+ \eta) \exp{\left(-\bar \lambda \sum_{i=0}^{s^j(0)-1} T_i\right)} |\xi| \notag  \\
&\qquad+ \left( \frac{(K_a+\eta)}{1-\delta}+1\right)
\hat\gamma 
\left( \sup_{0\leq i\leq s^j(0)-1} |e_i| \right)\notag \\
&\leq (K_a+ \eta) \exp{\left(-\bar \lambda \left(\sum_{i=0}^{k-1} T_i- \T_1 \right)\right)} |\xi| 
 + \gamma_b
\left( \sup_{0\leq i\leq s^j(0)-1} |e_i| \right)\notag \\
&\leq (K_a+ \eta) \exp{\left(\bar \lambda \T_1\right)}
\exp{\left(-\bar \lambda \sum_{i=0}^{k-1} T_i\right)} |\xi|  + \gamma_b
\left( \sup_{0\leq i\leq s^j(0)-1} |e_i| \right)\notag \\
&\leq K_b
\exp{\left(-\bar \lambda \sum_{i=0}^{k-1} T_i\right)} |\xi| +
\gamma_b
\left( \sup_{0\leq i\leq s^j(0)-1} |e_i| \right)\notag 
\end{align}
for all $k\in \N_0$, $|\xi|\leq M$ and $\{T_i\}\in \Phi(\bar T(M,E))$,
where $K_b:= (K_a+\eta)\exp{\left(\bar \lambda \T_1\right)}$
$=(K_a+\eta)/\delta$
and $\gamma_b \in \K_\infty$ is defined via 
$\gamma_b:= \left(\frac{K_a+\eta}{1-\delta}+1 \right) \hat \gamma $. \qed

\section{Proof of Theorem~\ref{adaptado}}
\label{app:proof:elcor}

The proof copies the proof of 
2. $\Rightarrow$ 1. of \cite[Theorem~3.2]{VALLARELLA201860} 
but keeps track of the changes introduced
by the fact that $\alpha_i(s) = K_i s^N$
with $N>0$ and $K_i\geq 1$ for all $i\in\{1,2,3\}$.

Since the assumptions of condition 2. of \cite[Theorem~3.2]{VALLARELLA201860} 
are satisfied, then by the latter theorem we know that system (4) is S-ISS-VSR.
The function $\alpha \in \K_\infty$
in \cite[eq.(28)]{VALLARELLA201860} results
$\alpha(s):=\alpha_3 \circ \alpha_2^{-1}(s)=\tilde \lambda s$
where  $ \tilde \lambda:=\frac{K_3}{K_2^{1/N}}$.
Therefore, the right-hand side of inequality \cite[eq.(32)]{VALLARELLA201860} is linear in $y$.
It then follows 
that the function $\beta_1 \in \KL$ 
in \cite[eq.(33)]{VALLARELLA201860} 
is given by
$\beta_1(s,t)=s\exp{\left(- \tilde \lambda t\right)}$.
Then, the function $\beta \in \KL$ in \cite[eq.(35)]{VALLARELLA201860} 
defined via $\beta(s,t):=\alpha_1^{-1}(2 \beta_1(\alpha_2(s),t))$ 
becomes $\beta(s,t)=Ks \exp{(- \lambda t)}$ where $K:= \frac{2 K_2}{K_1^{1/N}}$
and $\lambda:= \frac{\tilde \lambda }{N}$. 
Since this function $\beta$ characterizes the S-ISS-VSR property, 
it follows from Definition~\ref{def:SE-ISS-VSR} that system (4) is SE-ISS-VSR.
\qed

\section{Proof of Lemma~\ref{lema:21}}
\label{app:proof:lema:21}
Consider $\tilde \X \subset \R^n$ and $\U \subset \R^m$ given and
let $\phi_u(t,\xi):= F^e(\xi,u,t)$ be the unique solution of \eqref{eq:cs} 
that begins from initial condition $\xi \in \tilde \X$ at $t_0=0$ and has
a constant input $u\in \U$. 
Then
\begin{equation}
\label{eq:solexact}
\phi_u(t,\xi)=\xi + \int_0^t f(\phi_u(\tau,\xi),u) d\tau.
\end{equation}
Define $C_u:=\max_{u\in \U}\{|u|\}$, $R:=\max \{1,\max_{x\in\tilde \X} |x|\}$
and $\hat \X:= \{x \in \R^n:|x|\leq 2R\}$
and generate 
$C_f=C_f(2R,C_u)$
from Assumption~\ref{def:bounded}. 
Then
\begin{equation*}
|\phi_u(t,\xi)| \leq |\xi| + \int_0^t |f(\phi_u(\tau,\xi),u)| d\tau \leq R+ C_ft.
\end{equation*}
for all $t \in (0,\bar T)$
with $\bar T:=R/C_f$.
Define $\X :=\{ x:|x|\leq R+C_f \bar  T\}$ 
and 
$\tilde L:=\tilde L(\X,\U)$ 
from Assumption~\ref{def:UniLipschitz}.
The error between the solutions
of the exact model and the Euler approximate model
after one step of duration
$T \in (0, \bar  T)$
from initial condition $x^e_0=x^a_0=\xi$
with $\xi \in \tilde \X$ and
input $u\in \U$
results
\begin{align}
\vartheta(T):&=F^e(\xi,u,T)-F^{Euler}(\xi,u,T)  \notag \\
&=\xi + \left( \int_{0}^{T} f(\phi_{u}(\tau,\xi),u) d\tau \right) - \xi - T f(\xi,u) \notag \\
&=\int_{0}^{T} f(\phi_{u}(\tau,\xi),u) d\tau - T f(\xi,u) \notag \\
&=\int_{0}^{T} f(\phi_{u}(\tau,\xi),u)-  f(\xi,u) d\tau \label{eq:12345}. 
\end{align}
Taking the norm on both sides of \eqref{eq:12345}
we have
\begin{align}
&|\vartheta(T)|=\left |\int_{0}^{T} f(\phi_{u}(\tau,\xi), u)-  f(\xi,u) d\tau \right| \notag \displaybreak[0]\\
&\leq \int_{0}^{T} \left|f(\phi_{u}(\tau,\xi),u)-  f(\xi,u)\right| d\tau \leq \tilde L\int_{0}^{T} |\phi_{u}(\tau,\xi)-\xi| d\tau \notag   \\
&= \tilde L\int_{0}^{T} \left|F^e(\xi,u,\tau)-F^{Euler}(\xi,u,\tau)+\tau f(\xi,u)\right| d\tau \notag   \\
&\leq \tilde L\int_{0}^{T} |\vartheta(\tau)|d \tau+ \tilde L \int_{0}^{T} \tau | f(\xi,u)| d\tau \notag   \\
&\leq \tilde L \int_{0}^{T} |\vartheta(\tau)|d \tau+ \tilde L\frac{T^2}{2} | f(\xi,u)| \label{eq:bowie4}  
\end{align}
From \eqref{eq:bowie4}, by  Gronwall's inequality, we can bound the error as
$|\vartheta(T)| \leq \tilde L\frac{T^2}{2} | f(\xi,u)| e^{\tilde L T}$
for all $T \in [0, \bar T)$.
Defining $\bar L:= \frac{1}{2} \tilde L e^{\tilde L \bar  T}$
we have that
\begin{equation}
|F^e(\xi,u,T)-F^{Euler}(\xi,u,T)| \leq   \bar L T^2 |f(\xi,u)|
\end{equation}
for all $\xi \in \X$, $u\in \U$ and $T\in(0,\bar T)$.
\qed

\section{\textbf{Proof of Theorem~\ref{lemma:REPMC_SUFFICIENT_APROX}}}
\label{app:proof:lem:repmcsufapp}

  We will establish that $(\bar F^{RK},\bar F^e)$ is REPC by showing that both $(\bar F^{Euler},\bar F^e)$ and $(\bar F^{RK},\bar F^{Euler})$ are REPC 
and using the fact that REPC is transitive.
  
  Consider $M,E \geq 0$ given and
let them generate $K,T^{ii}>0$ from \ref{eq:condicion})
and $T^{i}>0$ from \ref{eq:fen0}).
Let $M,E$ generate $C_u,T^{u}>0$ from Assumption~\ref{def:controllaw}.
Define 
$\X:= \{x\in \R^n:|x|\leq M\}$
and $\U:= \{u\in \R^m:|u|\leq C_u\}$.
\begin{claim}
\label{clm:beta}
  $(\bar F^{Euler},\bar F^e)$ is REPC.
\end{claim}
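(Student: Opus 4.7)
The plan is to establish Claim~\ref{clm:beta} via the standard triangle-inequality split
\begin{align*}
|\bar F^{Euler}(x^a,e,T)-\bar F^e(x^b,e,T)| &\leq |\bar F^{Euler}(x^a,e,T)-\bar F^{Euler}(x^b,e,T)| \\
&\quad + |\bar F^{Euler}(x^b,e,T)-\bar F^e(x^b,e,T)|,
\end{align*}
where the first summand will produce the $(1+KT)|x^a-x^b|$ term of the REPC bound and the second will produce the $T\rho(T)(\max\{|x^a|,|x^b|\}+\phi(|e|))$ term.

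For the first summand, writing out $\bar F^{Euler}(x,e,T)=x+T f(x,U(x,e,T))$ and applying~\eqref{eq:desigualdad} from condition~\ref{eq:condicion}) directly yields $(1+TK)|x^a-x^b|$ for $T\in(0,T^{ii})$, $|x^a|,|x^b|\leq M$, $|e|\leq E$.

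For the second summand, I would first invoke Assumption~\ref{def:controllaw} to ensure $|U(x^b,e,T)|\leq C_u(M,E)$ whenever $T<T^u(M,E)$, so that with $\X:=\{|x|\leq M\}$ and $\U:=\{|u|\leq C_u\}$ Lemma~\ref{lema:21} becomes applicable and gives the bound $\bar L T^2 |f(x^b,U(x^b,e,T))|$ for $T\in(0,\bar T)$. To recast this in REPC shape I would split the vector-field magnitude as
\[
|f(x^b,U(x^b,e,T))|\leq |f(x^b,U(x^b,e,T))-f(0,U(0,e,T))|+|f(0,U(0,e,T))|\leq K|x^b|+\phi(|e|),
\]
using~\ref{eq:condicion}) with one argument set to zero together with~\ref{eq:fen0}). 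Setting $\rho(T):=\max\{1,K\}\bar L\, T$---which lies in $\K_\infty$ and is permitted, by Definition~\ref{def:HOA}, to depend on $M,E$ through $K$ and $\bar L$---then yields the desired bound $T\rho(T)(\max\{|x^a|,|x^b|\}+\phi(|e|))$ on the second summand.

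Adding the two summands and taking $T^*:=\min\{T^{i},T^{ii},T^u,\bar T\}$ produces the REPC inequality, with the universal gain $\phi$ being exactly the one supplied by~\ref{eq:fen0}). The only genuine bookkeeping hurdle is keeping track of these four thresholds on $T$ and making sure the compact sets $\X,\U$ used to generate $\bar L$ in Lemma~\ref{lema:21} are large enough to contain the state and control values of interest; there is no substantive analytic obstacle, since the quadratic-in-$T$ one-step truncation estimate of Lemma~\ref{lema:21}, combined with the linear-in-state Lipschitz bound on $f\circ U$ afforded by~\ref{eq:fen0})--\ref{eq:condicion}), matches the structure REPC demands exactly.
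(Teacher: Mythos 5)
Your proof is correct and follows essentially the same route as the paper's: the same triangle-inequality split at the Euler map evaluated at the exact model's state, the same use of condition~\ref{eq:condicion}) for the $(1+KT)|x^a-x^b|$ term, and the same combination of Lemma~\ref{lema:21} with conditions~\ref{eq:fen0})--\ref{eq:condicion}) (splitting $|f(x^b,U(x^b,e,T))|$ through the origin) to obtain the $T\rho(T)(\,\cdot\,+\phi(|e|))$ term, with the identical choices $\rho(T)=\max\{1,K\}\bar L\,T$ and $T^*=\min\{T^i,T^{ii},T^u,\bar T\}$. You also correctly flag the one point that matters for REPC, namely that $\phi$ comes from~\ref{eq:fen0}) and is therefore independent of $M,E$ while $\rho$ and $K$ need not be.
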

\indent\emph{Proof of Claim~\ref{clm:beta}:}
From Assumptions~\ref{def:UniLipschitz} and \ref{def:bounded}, the conditions of Lemma~\ref{lema:21} hold. Let $\X$ and $\U$ 
generate $\bar L, \bar T>0$ from
Lemma~\ref{lema:21}, so that the open-loop condition~(\ref{eq:desi1}) holds for all $x\in\X$, $u\in\U$ and $T\in (0,\bar T)$.
Define $T^{o}:= \min \{\bar T,T^{u}, T^{i}, T^{ii}\}$,
$\bar K:=\max\{K,1\}$
and $\rho \in \K_\infty$ via $\rho(s):= \bar K \bar L s$.
For all $|x^e|,|x^a| \leq M$, $|e|\leq E$ and $T\in (0,T^o)$ we have
\begin{align}
&|\bar F^e(x^e,e,T)-\bar F^{Euler}(x^a,e,T)|  \notag \\
&\leq
|F^e(x^e,U(x^e,e,T),T)-F^{Euler}(x^e,U(x^e,e,T),T)|  \notag \\
&+|F^{Euler}(x^e,U(x^e,e,T),T)-F^{Euler}(x^a,U(x^a,e,T),T)|  \notag  \\
&\leq
|F^e(x^e,U(x^e,e,T),T)-F^{Euler}(x^e,U(x^e,e,T),T)| \notag  \\
&+|x^e-x^a|+T|f(x^e,U(x^e,e,T))-f(x^a,U(x^a,e,T))|  \label{eq:a1}\\
&\leq
 (1+KT)|x^e-x^a|+ \bar L T^2 |f(x^e,U(x^e,e,T))| \label{eq:a2} \\
&\leq
 (1+KT)|x^e-x^a|  \notag \\
 &+ \bar L T^2 ( |f(x^e,U(x^e,e,T))-f(0,U(0,e,T))|  +|f(0,U(0,e,T))| ) \notag \\
&\leq (1+KT)|x^e-x^a| + \bar L T^2  \left(K |x^e|+ \phi(|e|)\right) \label{eq:a3}\\
&\leq (1+KT)|x^e-x^a| +  T \rho(T)  (|x^e|+ \phi(|e|)).
\end{align}
In \eqref{eq:a1} we have used the definition of the Euler approximation.
In \eqref{eq:a2} we have used \eqref{eq:desigualdad} from \ref{eq:condicion}) and \eqref{eq:desi1} from Lemma~\ref{lema:21}.
In \eqref{eq:a3} we have used \eqref{eq:can} from \ref{eq:fen0})
and \eqref{eq:desigualdad} from \ref{eq:condicion}).
Note that $\phi\in\Ki$ is given by \ref{eq:fen0}) and hence does not depend on $M$ or $E$. Thus, \eqref{eq:kraftwerk} holds and $(\bar F^{Euler},\bar F^e)$ is REPC.
\mer

\begin{claim}
\label{clm:al}
$(\bar F^{RK},\bar F^{Euler})$ is REPC.
\end{claim}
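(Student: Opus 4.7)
The plan rests on the RK consistency identity $\sum_{i=1}^s b_i = 1$, which allows rewriting $T f(x^b,U^b) = T\sum_i b_i f(x^b,U^b)$, so that the mismatch takes the form
\begin{equation*}
\bar F^{RK}(x^a,e,T) - \bar F^{Euler}(x^b,e,T) = (x^a - x^b) + T\sum_{i=1}^s b_i\bigl[f(y_i^a,U^a) - f(x^b,U^b)\bigr],
\end{equation*}
where $U^a := U(x^a,e,T)$, $U^b := U(x^b,e,T)$, and $y_i^a$ denote the RK intermediate stages generated from $(x^a,U^a,T)$ under the constant ZOH input $U^a$. I would then split each bracket as $[f(y_i^a,U^a)-f(x^a,U^a)] + [f(x^a,U^a)-f(x^b,U^b)]$. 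The second difference is bounded by $K|x^a-x^b|$ via condition~\ref{eq:condicion}), and this part will supply the $(1+\tilde K T)|x^a-x^b|$ term of the REPC estimate with $\tilde K := K\sum_i|b_i|$. The first difference is, by Assumption~\ref{def:UniLipschitz} applied with the common bounded input $U^a\in\U$, bounded by $L|y_i^a-x^a|$, so the crux is to estimate the stage increments.

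The hard part is to bound $|y_i^a-x^a|$ by a quantity that vanishes as $x^a,e\to 0$, so that the resulting term fits the REPC form $T\rho(T)(\max\{|x^a|,|x^b|\}+\phi(|e|))$. A direct estimate via Assumption~\ref{def:bounded} only yields $|y_i^a-x^a|=O(T)$ with constants independent of $x^a,e$, which is insufficient for equilibrium preservation. I would therefore proceed by induction on the stage index $j$ to bound $|f(y_j^a,U^a)|$. For $j=1$, since $y_1^a=x^a$, the split $f(x^a,U^a)=[f(x^a,U^a)-f(0,U(0,e,T))]+f(0,U(0,e,T))$ together with conditions~\ref{eq:condicion}) and~\ref{eq:fen0}) yields $|f(x^a,U^a)|\le K|x^a|+\phi(|e|) =: B$. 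For $j\ge 2$, the recursion $|y_j^a-x^a|\le T\sum_{k<j}|a_{jk}||f(y_k^a,U^a)|$ combined with Assumption~\ref{def:UniLipschitz} gives $|f(y_j^a,U^a)|\le B\,(1+O(T))$; for $T$ small enough, all $s$ stages satisfy $|f(y_j^a,U^a)|\le 2B$ and hence $|y_i^a-x^a|\le C T(|x^a|+\phi(|e|))$ for some constant $C$ depending on $M$, $E$, $L$ and the Butcher tableau.

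Assembling the estimates, the first bracket contributes at most $L|y_i^a-x^a|\le LCT(|x^a|+\phi(|e|))$, so defining $\rho(T):=LC\sum_i|b_i|\cdot T\in\K_\infty$ produces
\begin{equation*}
|\bar F^{RK}(x^a,e,T)-\bar F^{Euler}(x^b,e,T)|\le (1+\tilde K T)|x^a-x^b| + T\rho(T)\bigl(\max\{|x^a|,|x^b|\}+\phi(|e|)\bigr),
\end{equation*}
valid on $|x^a|,|x^b|\le M$, $|e|\le E$ and $T$ in some interval $(0,T^\ast)$ with $T^\ast=T^\ast(M,E)>0$ chosen small enough to validate the inductive estimate and to be compatible with $T^{ii}$ and $T^u$. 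This is exactly the REPC inequality, with the same $\phi\in\K_\infty$ supplied by condition~\ref{eq:fen0}).
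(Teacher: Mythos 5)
Your argument is correct in substance but follows a genuinely different route from the paper's. The paper's proof keeps the full stage sum and applies an Abel-summation (telescoping) rearrangement, adding and subtracting $\bigl(\sum_{i=j}^s b_i\bigr)\bar f(y_{j-1},e,T)$ so that consistency $\sum_i b_i=1$ collapses the leading term onto $\bar f(y_1,e,T)=\bar f(z,e,T)$; the remaining error is then a weighted sum of \emph{consecutive} stage differences $|\bar f(y_j,e,T)-\bar f(y_{j-1},e,T)|$, each bounded through the closed-loop Lipschitz condition~\ref{eq:condicion}) and a recursive estimate $|\bar f(y_i,e,T)|\le C(|y_1|+\phi(|e|))$. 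You instead anchor every stage to the base point, splitting $f(y_i^a,U^a)-f(x^b,U^b)$ into $[f(y_i^a,U^a)-f(x^a,U^a)]+[f(x^a,U^a)-f(x^b,U^b)]$: the second piece uses condition~\ref{eq:condicion}) exactly as in the paper, while the first exploits that all RK stages share the frozen ZOH input $U^a$, so the \emph{open-loop} Lipschitz property of Assumption~\ref{def:UniLipschitz} applies. Your version makes more transparent which hypothesis does what (condition~\ref{eq:fen0}) and~\ref{eq:condicion}) deliver the equilibrium-preserving factor $K|x^a|+\phi(|e|)$; Assumption~\ref{def:UniLipschitz} handles the intra-stage drift), at the cost of a slightly larger constant $\tilde K=K\sum_i|b_i|$ in the $(1+\tilde KT)$ term, which is immaterial for REPC. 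Both routes hinge on consistency and on the same recursive control of the stage vector fields.

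One point needs tightening: when you bound $|f(y_j^a,U^a)-f(x^a,U^a)|\le L|y_j^a-x^a|$ inside your induction, the Lipschitz constant $L$ is only valid on a compact set already known to contain all stages $y_j^a$, yet the bound $|y_j^a|\le M+O(T)$ is what the induction is producing. The paper resolves this chicken-and-egg issue by first establishing the a priori bound $|y_i|\le M_s$ through a separate recursion using only the local boundedness of $f$ (Assumption~\ref{def:bounded}, via $C_f$) and the boundedness of the control (Assumption~\ref{def:controllaw}), and only then fixing $L$ (its $K_s$) on the enlarged set. You should insert that preliminary step, or run a simultaneous induction that carries the crude bound $|y_j^a|\le M_s$ alongside the refined bound $|f(y_j^a,U^a)|\le 2B$; with that addition the proof is complete.
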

\indent\emph{Proof of Claim~\ref{clm:al}:}
For the sake of notation, 
define $\bar f(x,e,T)  :=f(x,U(x,e,T))$.
Employing the definitions of the Euler and Runge-Kutta models, we have
\begin{align}
 \big| \bar F^{Euler}(x,e,T)&-\bar F^{RK}(z,e,T)\big| \notag \\
 &\le |x-z| + T \left| \bar f(x,e,T) - \sum_{i=1}^s b_i \bar f(y_i,e,T)\right| \notag
\end{align}
Adding and subtracting $\left(\sum_{i=j}^s b_i\right) \bar f(y_{j-1},e,T)$, for $j=2,\ldots,s$, and operating, we reach
\begin{align*}
    &\big| \bar F^{Euler}(x,e,T)-\bar F^{RK}(z,e,T)\big| \le |x-z|  \\
    &+ T \Bigg| \bar f(x,e,T) \notag \
    - \left[ \sum_{j=2}^s \left(\sum_{i=j}^s b_i\right) 
    \big[\bar f(y_j,e,T) - \bar f(y_{j-1},e,T) \right]  \notag \\
    &+ \left(\sum_{i=1}^s b_i\right) \bar f(y_1,e,T) \big] \Bigg|.
\end{align*}
Taking into account that $\sum_{i=1}^s b_i = 1$ and that $y_1 = z$, then
\begin{align}
    &\big| \bar F^{Euler}(x,e,T)-\bar F^{RK}(z,e,T)\big|
    \le |x-z|   \label{cota}\\
    &+ T \left| \bar f(x,e,T) - \bar f(z,e,T) \right| 
    + TB \sum_{j=2}^s \left| \bar f(y_j,e,T) - \bar f(y_{j-1},e,T) \right|  \notag 
\end{align}
with $B:= \sum_{i=2}^s |b_i|$.
Define $T^{\#}:=\min\{1, T^{u},T^{iv}, T^{v}\}>0$ and 
$A:= 2\max_{i=2,\ldots, s , j=1,\ldots, i-1 } |a_{ij}|$.
Consider $|x|\leq M$, $|e|\leq E$, and $T\in (0,T^{\#})$. From \eqref{charl}
we have
\begin{align}
&|y_{i}|
\leq
|y_1|+  T^{\#} \sum_{j=1}^{i-1} |a_{ij}|\left|\bar f(y_j,e,T)\right| \notag \\
&\leq
|x|+  T^{\#} A\sum_{j=1}^{s-1}\left|f(y_j,U(y_j,e,T)) \right|,\quad\text{for }i=2,\ldots,s.
  \label{este}
\end{align}
Define $M_1:=M$ and recursively for $i=2,\ldots,s$,
\begin{equation*}
  M_i := M_{i-1} + (s-1)T^{\#}AC_f\left(M_{i-1},C_u(M_{i-1},E) \right),
\end{equation*}
where $C_f(\cdot,\cdot)$ and $C_u(\cdot,\cdot)$ are given by Assumptions~\ref{def:bounded} and~\ref{def:controllaw}. With these definitions, it follows that
  $|y_i|\leq M_s \text{ for all } 1\le i \leq s$.
Define 
$T^*:=\min\{1,T^{\#},T^u(M_s,E),T^{iv}(E),T^{v}(M_s,E)\}$.
and $K_s:=\max \{K(M_s,E),1 \}$. 
For $T\in (0,T^*)$, it follows that
\begin{align}
  \left|\bar f(y_1,e,T)\right| 
  &\leq\left|\bar f(y_1,e,T)-\bar f(0,e,T)\right|+\left|\bar f(0,e,T)\right| \notag \\
  &\leq K_s|y_1| +\phi(|e|) \notag
\end{align}
and for $i=2,\ldots,s$
\begin{align}
&\left|\bar f(y_i,e,T)\right|  \notag \displaybreak[0]\\
&\leq|f(y_i,U(y_i,e,T))-f(0,U(0,e,T)|+|f(0,U(0,e,T))| \notag \\
&\leq K_s|y_i| +\phi(|e|) 
\leq K_s\left|y_1+T \sum_{j=1}^{i-1} a_{ij} \bar f(y_j,e,T) \right| +\phi(|e|) \notag\\
&\leq K_s|y_1|+K_s T\left| \sum_{j=1}^{i-1} a_{ij} \bar f(y_j,e,T) \right| +\phi(|e|)\notag \\
&\leq K_s|y_1|+\phi(|e|)+K_sAT  
\sum_{j=1}^{i-1}  \left|\bar f(y_j,e,T) \right|. \label{cage}
\end{align}
Using \eqref{cage} recursively yields
\begin{align}
\left|\bar f(y_i,e,T)\right|  
&\leq 
\left[ K_s |y_1|+ \phi(|e|) \right] \sum_{j=0}^{i-1} (K_sAT)^j
 \notag \\&
 \leq  C
\left( |y_1|+ \phi(|e|) \right),\quad\text{for }i=1,\ldots,s, \label{mozart}
\end{align}
where we have used the fact that $K_s\geq 1$
and defined $C:=K_s\sum_{j=0}^{s-1} (K_sAT^{*})^j$.
From \eqref{cota} and \ref{eq:condicion}), then provided $T\in (0,T^*)$, we have
\begin{align}
    &\big| \bar F^{Euler}(x,e,T)-\bar F^{RK}(z,e,T)\big| \notag \\ 
    &\le (1+KT)|x-z| + TBK_s \sum_{j=2}^s \left|y_j - y_{j-1} \right|. \label{XYZ}
\end{align}
Using \eqref{charl} and defining $a_{ij} := 0$ for $j\ge i$, we have
\begin{align}
|y_i-y_{i+1}| &= \left|  T\sum_{j=1}^{s} (a_{ij}-a_{(i+1)j} ) \bar f(y_j,e,T) \right| \notag \\
 &\leq T A  \sum_{j=1}^{s} |\bar f(y_j,e,T)| 
 \quad \text{for }i=1,\ldots,s-1. \label{brianeno}
\end{align}
Combining \eqref{brianeno} and \eqref{XYZ}, then
\begin{align}
 &|\bar F^{Euler}(x,e,T)-\bar F^{RK}(z,e,T)| \notag\displaybreak[0] \\
        &\leq  (1+KT)|x-z|  + (s-1)ABK_sT^2     \sum_{j=1}^{s} \left|\bar f(y_j,e,T)\right| \label{quan}
\end{align}
Using \eqref{mozart} in \eqref{quan} we obtain,
for all $|x|,|z|\leq M$, $|e|\leq E$ and $T\in(0,T^*)$
\begin{align*}
 &|\bar F^{Euler}(x,e,T)-\bar F^{RK}(z,e,T)|  \notag \\
 &\leq  (1+KT)|x-z| + (s-1) ABK_s T^2 C  \sum_{j=1}^{s}  \left( |y_1|+ \phi(|e|) \right) \\
    &\leq  (1+KT)|x-z| + T\rho(T)   \left( |z|+ \phi(|e|) \right) 
\end{align*}
where $\rho \in \K_\infty$ is defined via $\rho(T):=(s-1)s A BCK_s T$. Note that $\phi\in\Ki$ is given by \ref{eq:fen0}) and hence does not depend on $M$ or $E$. Thus, (\ref{eq:kraftwerk}) holds and $(\bar F^{Euler},\bar F^{RK})$ is REPC.
\mer

According with Claim~\ref{clm:beta} and Claim~\ref{clm:al},
by Proposition~\ref{cor:1} then the pair $(\bar F^{RK},\bar F^{e})$
is REPC. \qed

\section{\textbf{Proof of Lemma~\ref{lema_interbound}}}
\label{app_lema_interbound}

Consider $M,E\geq0$ given and let the S-ISS-VSR property generate the bound \eqref{eq:S-ISS-VSR} and 
$T^\star(M,E)$.
Given some $k \in \N_0$, the evolution of the sampled-data system 
between any consecutive samples $x_k$ and $x_{k+1}$ of the continuous-time solution
that begins from initial condition $|x_0| \leq M$ with $\|\{e_i\}\| \leq E$ and $\{T_i\} \in \Phi(T^\star)$
is given by $x(t_k+t)=\bar F^e(x_k,e_k,t)$, thus
\begin{align}
x(t_k+t)=x_k + \int_{0}^{t} f\left(\bar F^e(x_k,e_k,s),U(x_k,e_k,T_k)\right) ds \label{eq:intersam}
\end{align}
for all $t\in [0,T_k]$.
Define $R:= \beta(M,0)+\gamma(E)$ from
Definition~\ref{def:SE-ISS-VSR}, then $|x_k| \leq R$ for all $k\in \N_0$.
From Assumption \ref{def:controllaw} we have that $|U(x,e,T)|\leq C_u(|x|,|e|)$
for all $|x| \leq R$, $|e|\leq E$ and $T\in (0,T^u(R,E))$.
Define 
$C(|x|,|e|):=C_f(2 |x|,C_u(|x|,|e|))$ with $C_f$
from Assumption~\ref{def:bounded}, 
and $\bar T(M,E):= \min \left\{T^\star,T^u, \frac{R}{C(R,E)} \right\}$.
Next, we will prove that  
$|x(t_k+t)|\leq 2 R$ for all $t\in [0,\bar T]$. 
Let $|x_k|\leq R$, $|e_k|\leq E$
and
define
\begin{align}
    \tau:= \inf \left\{t>0:\left|F^e(x_k,U(x_k,e_k,T_k),t)\right|\geq 2R\right \}. \label{tiempo_}
\end{align}

Due to 
$ F^e(x_k,U(x_k,e_k,T_k),0)=x_k$
and the continuity of 
$ F^e(x_k,U(x_k,e_k,T_k),\cdot)$
we have $\tau>0$ for all $T_k \in [0, \bar T]$.
For a contradiction, suppose that $\tau<\bar T$ for some $T_k \in [0, \bar T]$. From continuity and 
\eqref{tiempo_}, it follows that $\left| F^e(x_k,U(x_k,e_k,T_k), \tau)\right|=2 R$ and
$\left|F^e(x_k,U(x_k,e_k,T_k),t)\right| < 2 R$ for all $t \in [0,\tau)$.
Thus,
\begin{align}
&\left| F^e(x_k,U(x_k,e_k,T_k),\tau)\right| \notag \\
&\leq |x_k| + \int_{0}^{\tau} \left|f\left(F^e(x_k,U(x_k,e_k,T_k),s),U(x_k,e_k,T_k)\right)\right| ds \notag \\
&< R+ C(R,E)  \bar T
\le 2R\label{result1}
\end{align}
The strict inequality in \eqref{result1} 
contradicts $\left| F^e(x_k,U(x_k,e_k,T_k),\tau)\right|=2R$.
Therefore, $\tau \geq \bar T$.
From \eqref{eq:intersam}, \eqref{result1} and \eqref{eq:S-ISS-VSR}, then 
for all $k \in \N_0$, $\{T_i\} \in \Phi(\bar T)$, $|x_0|\leq M$ and $\|\{e_i\}\| \leq E$, we have
\begin{align}
\label{RRRR4}
\left|x(t_k+t)\right| %
&\leq \beta\left(|x_0|,\sum_{i=0}^{k-1}T_i\right)+\gamma\left(\sup_{0\le i \le k-1}|e_i| \right )+ C(R,E)t
\end{align}
for all $t \in [0,T_k]$.

\section*{Acknowledgments}
The authors are grateful to the Associate Editor for the in-depth reading and very constructive suggestions for improvement. Work partially supported by Agencia Nacional de 
Promoción Científica y Tecnológica (ANPCyT), Argentina, under grant
PICT 2018-1385.

\bibliographystyle{elsarticle-harv}
\bibliography{bibliografia201810}

\end{document}